\pgfplotsset{width=7cm,compat=newest}
\newlist{inparaenum}{enumerate}{2}
\setlist[inparaenum]{nosep}
\setlist[inparaenum,1]{label=\bfseries\arabic*.}
\setlist[inparaenum,2]{label=\arabic{inparaenumi}\emph{\alph*})}
\newcommand{\be}{\begin{equation}}
\newcommand{\ee}{\end{equation}}
\newcommand{\mbb}[1]{\mathbb{#1}}
\newcommand{\mcal}[1]{\mathcal{#1}}
\newtheorem{theorem}{Theorem}[section]
\newtheorem{lemma}{Lemma}[section]
\newtheorem{fact}{Fact}[section]
\newtheorem{definition}{Definition}
\newtheorem{result}{Result}
\def\bs{\boldsymbol}
\def\sf{S}
\normalsize\title{\LARGE \bf
	Strategically revealing intentions in General Lotto games
	\thanks{This work is supported by UCOP Grant LFR-18-548175, ONR grant \#N00014-20-1-2359, AFOSR grants \#FA9550-20-1-0054 and \#FA9550-21-1-0203, and the Army Research Lab through the ARL DCIST CRA \#W911NF-17-2-0181. The material in this paper contains and extends a subset of results from \cite{Chandan_2020}.}}
\author{
	Keith Paarporn, Rahul Chandan, Dan Kovenock, Mahnoosh Alizadeh, and Jason R. Marden
	\thanks{K. Paarporn, R. Chandan, M. Alizadeh, and J.R. Marden are with the Department of Electrical and Computer Engineering, University of California, Santa Barbara. D. Kovenock is with the Economic Science Institute, Argyros School of Business and Economics at Chapman University. Contact: \texttt{ \{kpaarporn,rchandan,alizadeh,jrmarden\}@ucsb.edu, kovenock@chapman.edu}.
	}
}
\begin{document}
\maketitle

\begin{abstract}
	Strategic decision-making in uncertain and adversarial environments is crucial for the security of modern systems and infrastructures.  A salient feature of many optimal decision-making policies is a level of unpredictability, or randomness, which helps to keep an adversary uncertain about the system's behavior.  This paper seeks to explore decision-making policies on the other end of the spectrum -- namely, whether there are benefits in revealing one's strategic intentions to an opponent before engaging in  competition. We study these scenarios in a well-studied model of competitive resource allocation problem known as General Lotto games. In the classic formulation, two competing players simultaneously allocate their assets to a set of battlefields, and the resulting payoffs are derived in a zero-sum fashion. Here, we consider a multi-step extension where one of the players has the option to publicly pre-commit assets in a binding fashion to battlefields before play begins. In response, the opponent decides which of these battlefields to secure (or abandon) by matching the pre-commitment with its own assets. They then engage in a General Lotto game over the remaining set of battlefields. Interestingly, this paper highlights many scenarios where strategically revealing intentions can actually significantly improve one's payoff.  This runs contrary to the conventional wisdom that randomness should be a central component of decision-making in adversarial environments.
\end{abstract}

\section{Introduction}

Society is increasingly reliant on autonomous technologies ingrained in critical infrastructures and socio-technical systems, made possible by advances in computing, communication, and control. Ensuring the security of these systems against adversaries poses several problems that must be addressed. As such, strategic decision-making in adversarial environments is consequential to many problems, including the security of cyber-physical systems, perimeter defense, and surveillance \cite{Abdallah_2021,Wu_2021_TAC,VonMoll_2020,Vamvoudakis_2014}. One well-studied framework for studying such systems is zero-sum games where the primary focus centers on characterizing optimal (max-min) strategies for competing players \cite{vonNeumann,alpcan2010network,Vamvoudakis_2017,Basar_1976}. Several different formulations of zero-sum games have been investigated in the control-theoretic literature encompassing asymmetric information, dynamics, and team-based decision-making \cite{Li_2020,Kartik_2021asymmetric,Kartik_2021teams}. In many of these formulations, classic methodologies in feedback and control have been instrumental in the derivation of optimal strategies.


This paper focuses on a class of zero-sum games (equivalently, constant-sum games) that models resource allocation in competitive scenarios.  In particular, two players must strategically allocate their limited resources over a number of battlefields. A player wins the battlefields (and their associated values) that it sends more resources to, and the objective is to maximize the total value of secured battlefields. These scenarios are known in the literature as "Colonel Blotto" games. The equilibrium strategies are notoriously difficult to characterize, and general solutions to the Blotto game are still an open problem \cite{Thomas_2018}. For this reason, researchers often study the General Lotto game, which only requires players to spend their resource budgets in expectation \cite{Bell_1980,Myerson_1993,Hart_2008,Kovenock_2020}. It is more amenable to analysis while maintaining essential aspects of competitive resource allocation. The Colonel Blotto game and its many variants can naturally be applied to cyber-physical and networked system security \cite{Shahrivar_2014,Ferdowsi_2020,Guan_2019}, economic competition, and military strategy \cite{Borel,Gross_1950,Blackett_1958,Bell_1980,Shubik_1981,Roberson_2006,Golman_2009}.


\begin{figure*}[t]
	\centering
	
	\begin{subfigure}{.3\textwidth}
		\centering
		\includegraphics[scale=0.16]{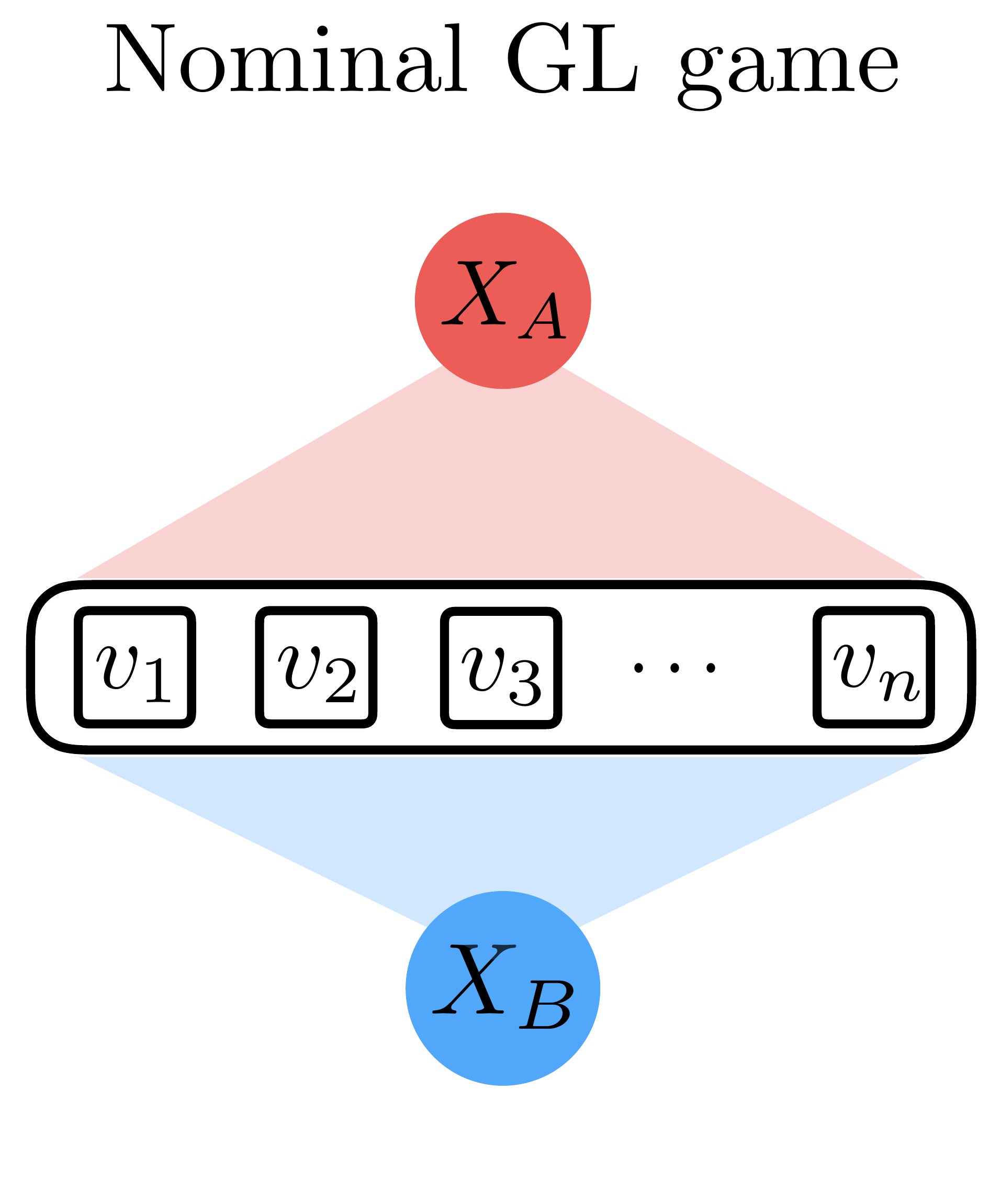}
		\caption{}\label{fig:GL_diagram}
	\end{subfigure}  
	\begin{subfigure}{.68\textwidth}
		\centering
		\includegraphics[scale=0.16]{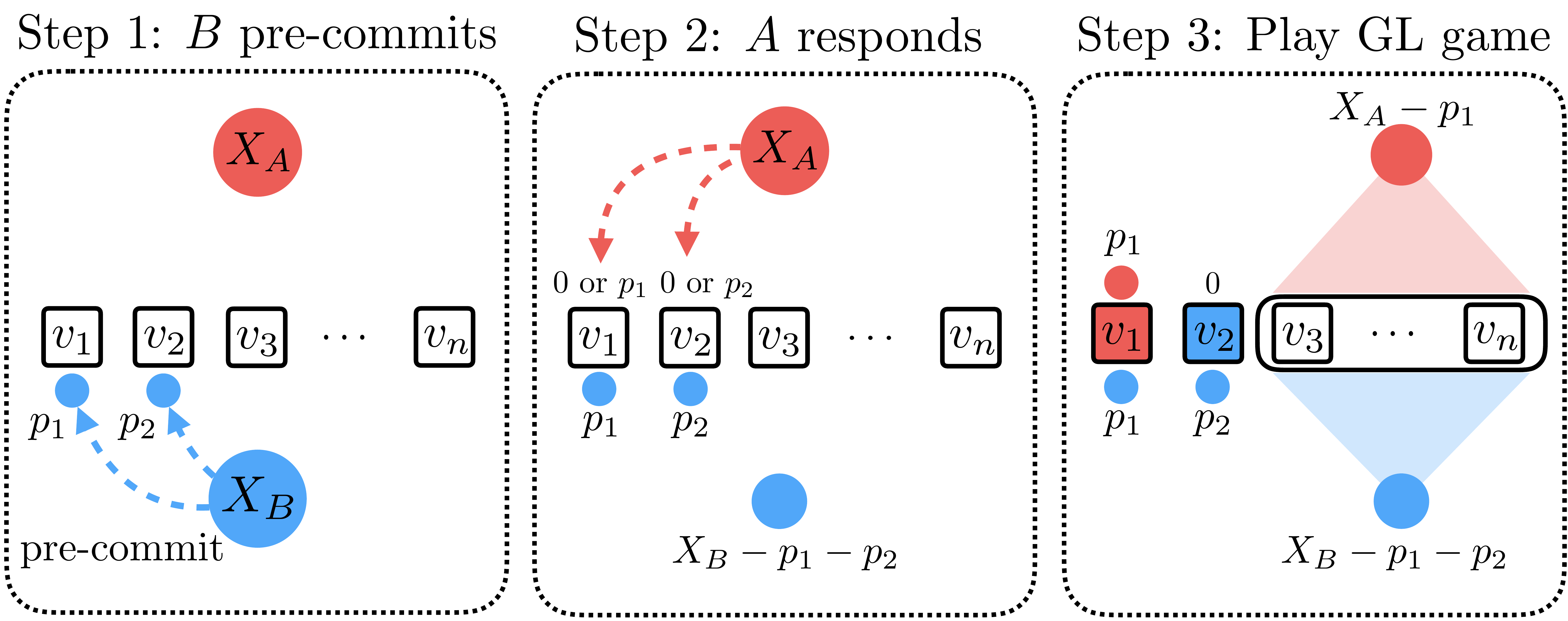}
		\caption{}\label{fig:precommit_diagram}
	\end{subfigure}
	\caption{Model of public pre-commitments. (a) For simplicity, we depict General Lotto games where battlefield valuations are symmetric across players in these diagrams, i.e. $\text{GL}(X_A,X_B,\bs{v})$. This is the nominal interaction if no player has the option to pre-commit resources, or the pre-committing player chooses not to pre-commit any resources. (b) Diagrams showing the sequence of events when player $B$ has the option to pre-commit. In step 1, $B$ pre-commits $p_1$ and $p_2$ resources to battlefields 1 and 2, respectively. In step 2, $A$ publicly responds to the pre-commitments by either matching or withdrawing. It secures the battlefields it matches, and loses the ones it withdraws from. In step 3, the players engage in a GL game on the remaining set of battlefields with their remaining resources.}
	\label{fig:diagrams}
\end{figure*}

A salient feature of optimal decision-making  policies in Colonel Blotto games is  a  level  of  unpredictability,  or  randomness, which  keeps  an  opponent  uncertain  about  one's behavior.  In this paper, we consider whether it is ever beneficial to reveal one's strategic intentions to an opponent. At first thought, doing so may only hurt one's position in a competition as randomization makes one's strategy less exploitable by an opponent. Indeed, in the Colonel Blotto game, pure strategy equilibria do not exist in most interesting cases (i.e. when players' resource budgets are not too asymmetric), and thus optimal mixed strategies must be characterized. However, there are practical contexts where revealing one's strategies do provide benefits. For example, shows of force are commonly used in military operations to discourage an enemy from engaging in conflict.

To study this aspect of adversarial interaction, we consider a sequenced formulation of General Lotto games, in which one player has the option to pre-commit resources to battlefields. These battlefields are then subject to a take-it or leave-it rule -- the opponent, in response, must decide which battlefields to secure by matching the pre-commitment with its own resources, and which battlefields to withdraw from entirely. After the opponent makes a decision, both players subsequently engage in a conventional General Lotto game on the remaining set of battlefields.

We are primarily concerned about whether there are benefits for a single competitor to publicly pre-commit resources in the multi-stage formulation.   We investigate pre-commitments under two different contexts. 1) The standard General Lotto game, which is a one-vs-one scenario over a set of $n$ battlefields that are commonly valued by both players.  2) A generalized General Lotto game, which allows for battlefield valuations to be asymmetric across players \cite{Kovenock_2020}. We find there are benefits to pre-commit in both contexts, though the types of benefits and conditions for them to exist differ. Most notably, a weaker-resource player never has an incentive to pre-commit in the first scenario. A weaker-resource player can have incentives to pre-commit in the second scenario. Before providing a summary of our results and contributions, we first introduce General Lotto games.


\subsection{Background on General Lotto games}\label{sec:GL_background}

A General Lotto game consists of two players $i=A,B$, each with limited resource endowments $X_A,X_B$. They compete over a set of battlefields $\mcal{B} = \{1,\ldots,n\}$. The endowments are use-it-or-lose-it, as there is no opportunity cost for the players to use resources outside of the game. An allocation for player $i$ is a vector $\bs{x}_i = \{x_{i,b}\}_{b\in\mcal{B}} \in \mbb{R}_+^n$, which describes the division of $i$'s resources to the $n$ battlefields. Player $i$ holds a valuation $v_{i,b} \geq 0$ for each battlefield $b \in \mcal{B}$. Let us denote $\bs{v}_i = \{v_{i,b}\}_{b\in\mcal{B}}$ as the vector of player $i$'s valuations. Player $i$ wins battlefield $b$ and the value $v_{i,b}$ if it allocates more resources than its opponent. The losing player on $b$ gets zero value. An admissible strategy for player $i$ is any $n$-variate distribution $F_i$ over $\mbb{R}_+^n$ (non-negative real vectors) that satisfies the condition  
\begin{equation}\label{eq:lotto_constraint}
\mathbb{E}_{\bs{x}_i \sim F_{i}}\left[ \sum_{b\in\mcal{B}} x_{i,b} \right] \leq X_i.
\end{equation}
In words, a player can randomize over any set of allocations, as long as the resources expended do not exceed its endowment \emph{in expectation}. We refer to an instance of the game with $\text{GL}(X_A,X_B,\bs{v}_A,\bs{v}_B)$.

We say the battlefield valuations are \emph{symmetric} across players if $v_{A,b} = v_{B,b}$ for all $b \in \mcal{B}$, and \emph{asymmetric} otherwise\footnote{The equilibrium strategies in the symmetric case are also invariant to unilateral scaling of one of the player's valuations, i.e. if it holds that $v_{A,b} = K\cdot v_{B,b}$ $\forall b\in\mcal{B}$ and for some $K > 0$. Here,  the players' valuations of the battlefields are relatively identical. The analysis of \cite{Kovenock_2020} is required for cases of asymmetric and relatively different valuations.}. In the symmetric case, there is an equilibrium payoff for both players. We refer to an instance of the General Lotto game in the symmetric case with $\text{GL}(X_A,X_B,\bs{v})$, where the players' valuations are encoded in a single valuation vector $\bs{v}$ (see Figure \ref{fig:GL_scenario}). The equilibrium strategies and payoffs for all instances of General Lotto games have been characterized in the literature \cite{Hart_2008,Kovenock_2012,Kovenock_2020}. Multiple payoff-distinct equilibria can arise when the relative valuations of battlefields are asymmetric across players \cite{Kovenock_2020}. 




\begin{figure*}[t]
	\centering
	\begin{subfigure}{.32\textwidth}
		\centering
		\includegraphics[scale=0.22]{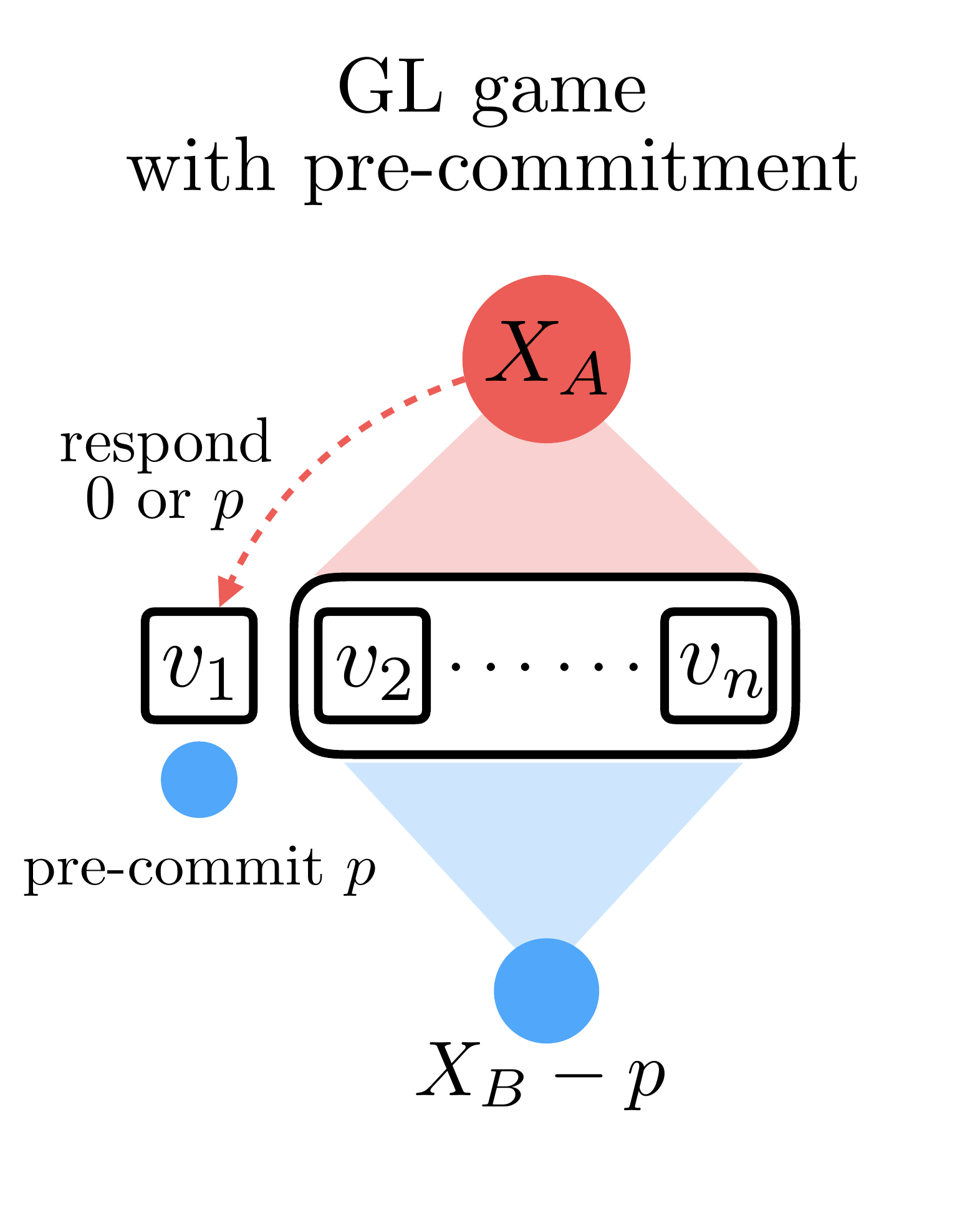}
		\caption{}\label{fig:GL_scenario}
	\end{subfigure} 
	\begin{subfigure}{.32\textwidth}
		\centering
		\includegraphics[scale=0.38]{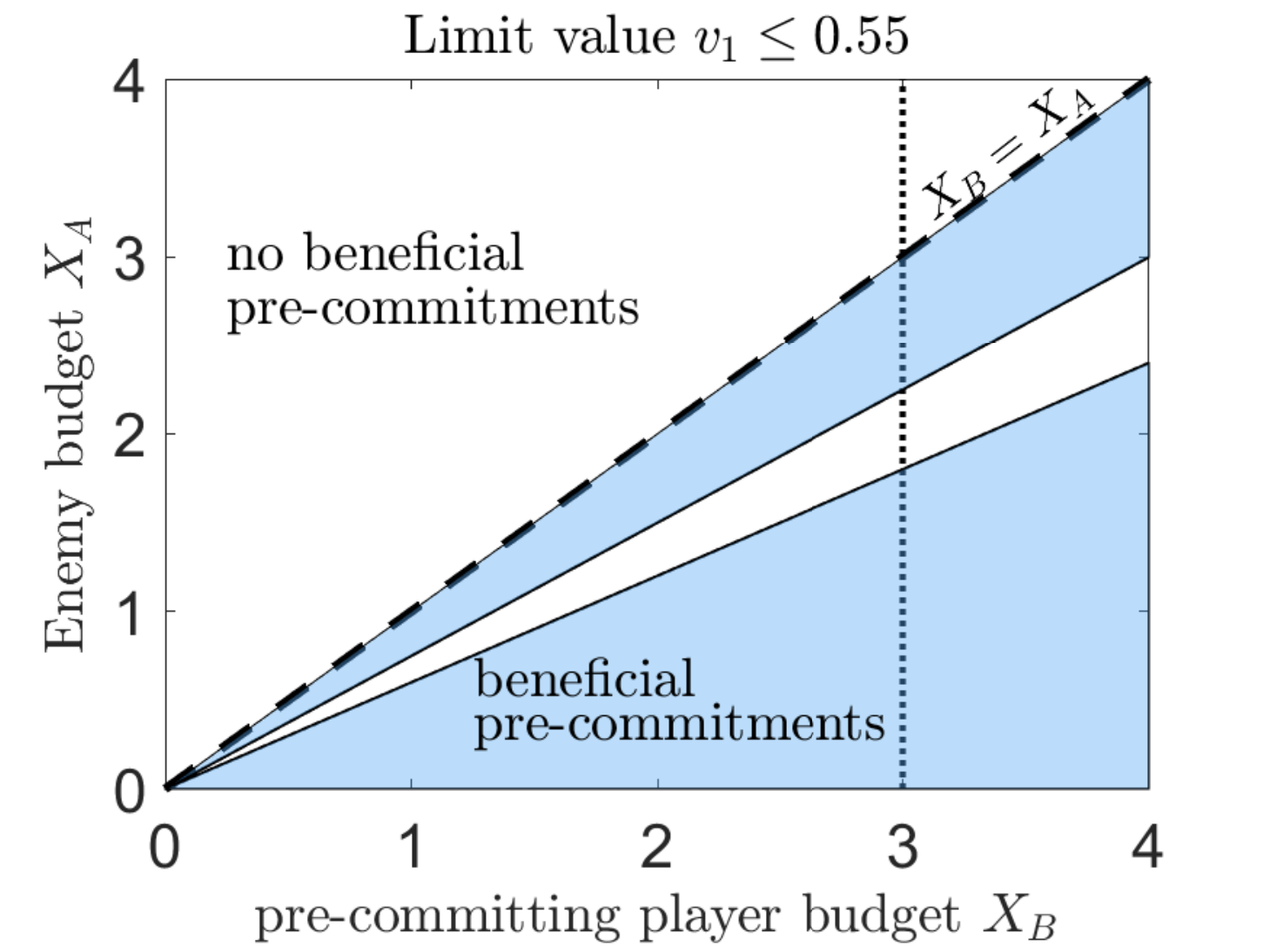}
		\caption{}\label{fig:GL_precommit_regions}
	\end{subfigure}  
	\begin{subfigure}{.32\textwidth}
		\centering
		\includegraphics[scale=0.38]{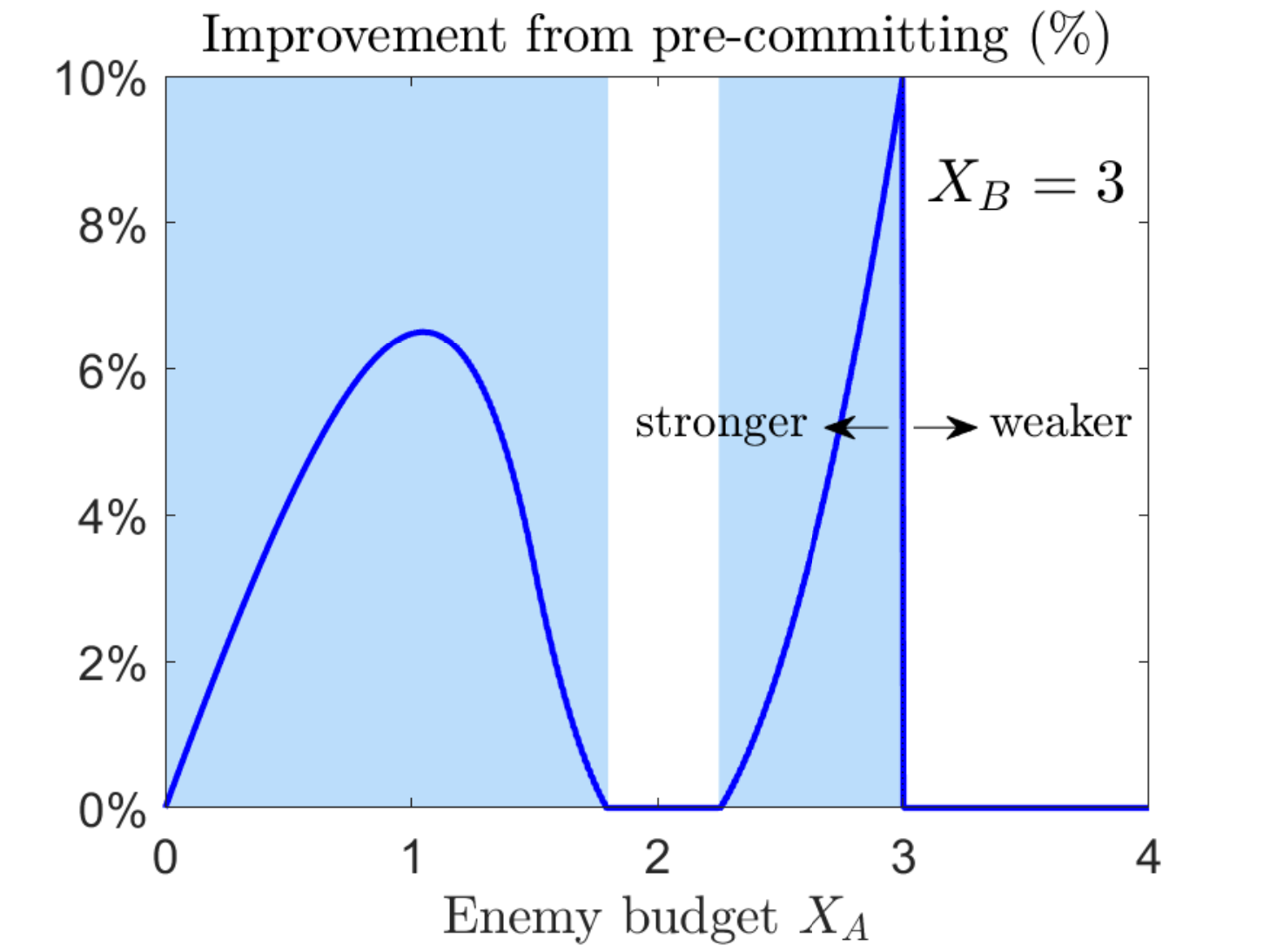}
		\caption{}\label{fig:GL_slice}
	\end{subfigure}
	\caption{Scenario 1: Symmetric valuations. (a) Player $B$ pre-commits to a single battlefield of its choice. Our results establish that a pre-commitment to a single battlefield is preferable over pre-commitments to multiple battlefields of the same total value  (Lemma \ref{lem:single_t_optimal}). We thus seek necessary and sufficient conditions for which pre-committing to a single battlefield is beneficial for some set of valuations $\bs{v}$.
		(b) Parameter regimes where a player has an incentive to pre-commit (blue regions), i.e. to outperform its equilibrium payoff in the nominal GL game (Result 1). In this example, the value of the battlefield that $B$ can pre-commit to is restricted to not exceed 0.55. Here, we consider any set of valuations $\bs{v}$ whose total value is fixed to one. A full characterization of regions is given in Theorem \ref{thm:GL_result}. Incentives only exist when the pre-committing player is stronger than its opponent.  (c) The percent improvement in payoff that pre-commitments offer to player $B$ over playing the nominal GL game, as a function of the enemy's budget $X_A$ (tracing out parameters of the dotted vertical line in (b)).  Here, the pre-committing player's budget is fixed to $X_B=3$. Pre-committing can never outperform the nominal payoff when the player is weaker ($X_B < X_A$).}
	\label{fig:informal_thm1}
\end{figure*}

\subsection{A model of public pre-commitments}\label{sec:pc_model}

In this work, we evaluate whether there are advantages for a player to publicly pre-commit resources in competition. To do so, we consider an extension of General Lotto games, where one of the two players has the option to publicly pre-commit any amount of its resources to multiple battlefields. This interaction proceeds according to the following three-stage sequence of events.

\vspace{1mm}\noindent 1) One of the players, say $B$, selects a subset of battlefields $\mcal{P} \subseteq \mcal{B}$, and a pre-commitment, which is a tuple $\bs{p} = \{p_b\}_{b\in\mcal{P}}$ satisfying $\sum_{b\in\mcal{P}} p_b \leq X_B$. Here, $p_{b} \geq 0$ is the amount of resources placed on battlefield $b \in \mcal{P}$. Once pre-committed, player $B$ cannot add or take away the $p_b$ resources from each of the battlefields $b \in \mcal{P}$.

\vspace{1mm}\noindent 2) In response, the opponent ($A$) engages in a bidding contest on each of the pre-committed battlefields, where player $B$'s bid on battlefield $b\in\mcal{P}$ is deterministic and fixed at $p_b$. Player $A$ then has the opportunity to decide which of these battlefields to secure, and which ones to leave behind. Player $A$ secures battlefields by matching the pre-commitments with its own resources\footnote{In the bidding contests, we assume ties will be awarded to player $A$. This approximates (to arbitrary precision) the requirement that player $A$ sends $p_b + \epsilon$ in order to secure $b \in \mcal{P}$, for any $\epsilon > 0$. }. It withdraws entirely from the battlefields it decides not to match. We assume $A$'s response also becomes public knowledge as soon as it is taken.

\vspace{1mm}\noindent 3) The battlefields $A$ secures in stage 2 are awarded to player $A$, and the battlefields $A$ withdrew from are awarded to player $B$. No additional resources can be devoted to these battlefields. Both players subsequently engage in the game $\text{GL}(X_A-p_A,X_B-p,\{v_{A,b}\}_{b\notin\mcal{P}},\{v_{B,b}\}_{b\notin\mcal{P}})$ on the remaining set of battlefields $\mcal{B}\backslash\mcal{P}$ with their remaining resources. Here we denote $p = \sum_{b\in\mcal{P}} p_b$ as the total amount of resources $B$ used for pre-commitments, and $p_A \leq p$ as the amount of resources $A$ used to match pre-commitments in its response.

An illustration of this procedure\footnote{In our model, we are assuming in stage 2 that player $A$'s response to the pre-commitment becomes common knowledge before the subsequent Lotto game is played in stage 3. One can consider alternative formulations where such a response is not made public to the pre-committing player. Such analyses are outside the scope of the present paper, and we leave them as future research directions.
} is given in Figure \ref{fig:precommit_diagram}. A criteria for whether pre-commitments offer any advantages for player $B$ is to compare the  payoff it obtains, i.e. the value of battlefields $A$ withdraws from plus the payoff from engaging in the subsequent GL game in stage 3, to the payoff it would obtain if it did not have the option to pre-commit at all, i.e. its equilibrium payoff from the \emph{nominal} game $\text{GL}(X_A,X_B,\bs{v}_A,\bs{v}_B)$. Here, we are assuming player $A$ makes the optimal response to the pre-commitment, such that its subsequent payoff in stage 3 is maximized.

\subsection{Overview of contributions}

This paper analyzes scenarios where there is an advantage for a competitor to pre-commit resources. We study two distinct scenarios -- in the first scenario, the players are engaged in a General Lotto game where valuations are symmetric across players. In the second scenario, players are engaged in a General Lotto game where the battlefield valuations are asymmetric and relatively different across players.


\vspace{1mm}\noindent{\bf Scenario 1 -- Symmetric battlefield valuations}: Here, the battlefield valuations are symmetric across players, i.e. $v_{A,b} = v_{B,b}$ for all $b\in\mcal{B}$. Player $B$ has the option to pre-commit (see Figure \ref{fig:precommit_diagram}) in the fashion described in Section \ref{sec:pc_model}. We are concerned with identifying conditions for which player $B$ has an incentive to pre-commit. Specifically, we seek necessary and sufficient conditions on the players' budgets $X_A$, $X_B$ for which there exists a beneficial pre-commitment for player $B$ on some vector $\bs{v}$ of battlefield valuations. 

Our analysis first establishes that these conditions can be narrowed to identifying whether a pre-commitment to a single battlefield can be beneficial. Indeed, we show a pre-commitment to a single battlefield of value $v$ is preferable over a pre-commitment to multiple battlefields whose total value is $v$ (Lemma \ref{lem:single_t_optimal}), given the cumulative value of the remaining battlefields in both scenarios are equivalent. The main highlight of our first set of results is given below, where we refer to the weaker (stronger) player as the one with the smaller (larger) resource endowment.
\begin{result}
	In the General Lotto game with symmetric valuations, player $B$ never has an incentive to pre-commit resources if it is the weaker player. However, player $B$ can have incentives to pre-commit if it is the stronger player.
\end{result}
Theorem \ref{thm:GL_result} identifies necessary and sufficient conditions on parameters for the existence of beneficial pre-commitments. These conditions also assert that the value of the battlefield(s) that player $B$ pre-commits to must exceed a certain threshold that depends on $X_A$ and $X_B$. Thus, if restrictions are placed on the total value that $B$ can pre-commit to, there may not exist incentives to pre-commit for any valuation vector $\bs{v}$. Note that when no such restrictions are placed, player $B$ can secure the entire set of battlefields through pre-commitments if $X_B > nX_A$, or if $X_B > X_A$ and there is only a single battlefield that is contested. 

\begin{figure*}[t]
	\centering
	\begin{subfigure}{.32\textwidth}
		\centering
		\includegraphics[scale=0.18]{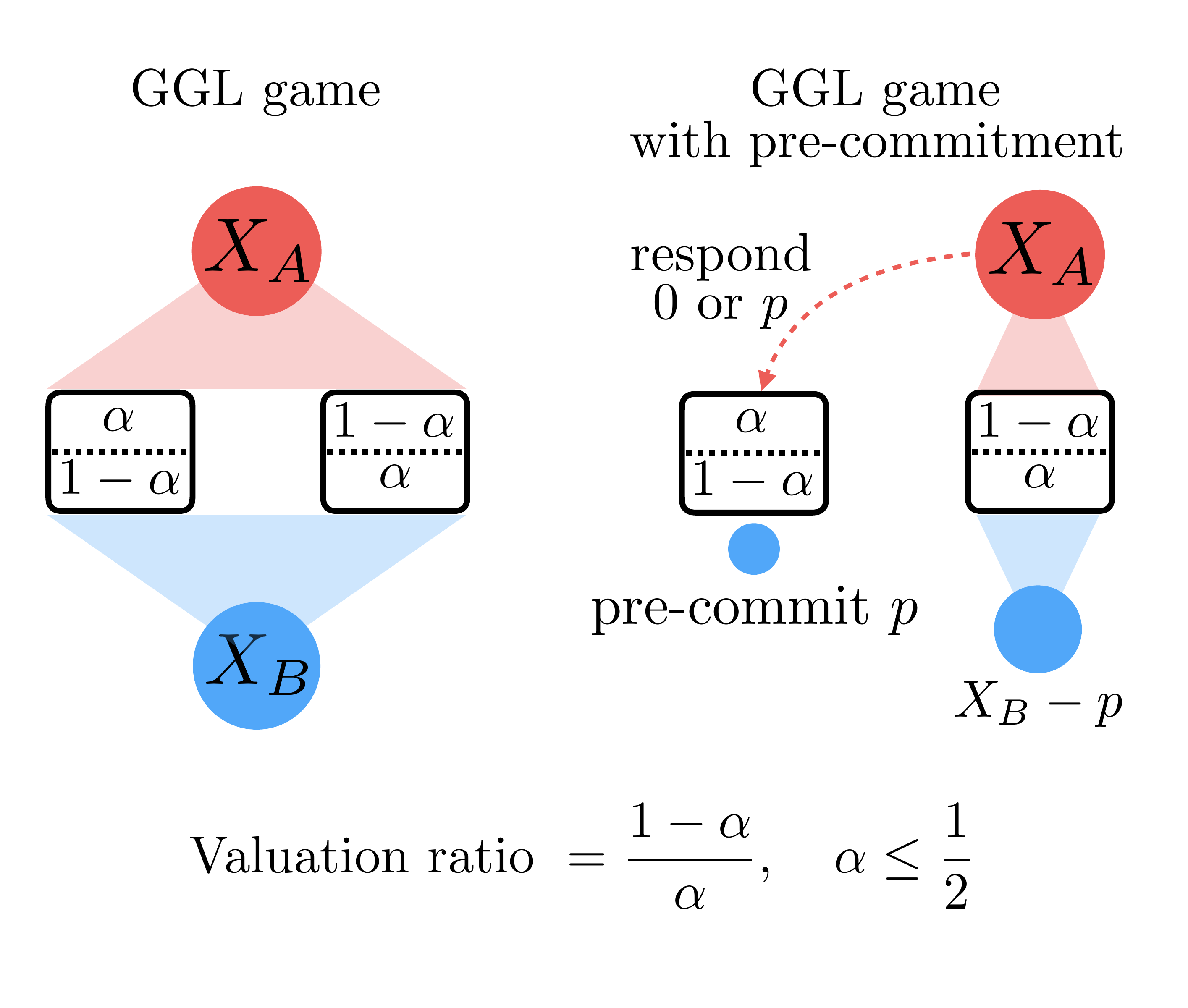}
		\caption{}\label{fig:GGL_scenario}
	\end{subfigure}
	\begin{subfigure}{.32\textwidth}
		\centering
		\includegraphics[scale=0.4]{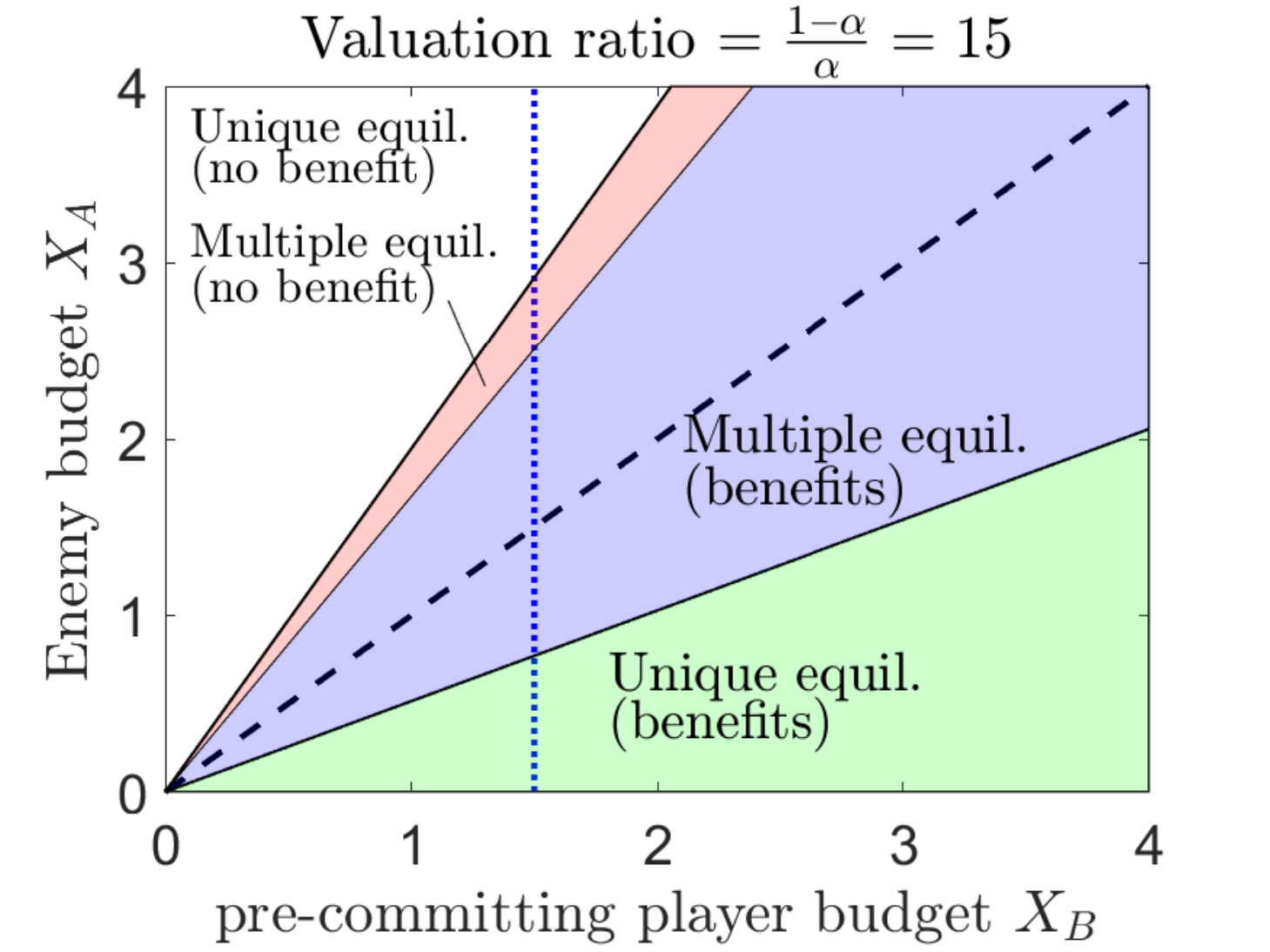}
		\caption{}\label{fig:GGL_precommit_regions}
	\end{subfigure}  
	\begin{subfigure}{.32\textwidth}
		\centering
		\includegraphics[scale=0.4]{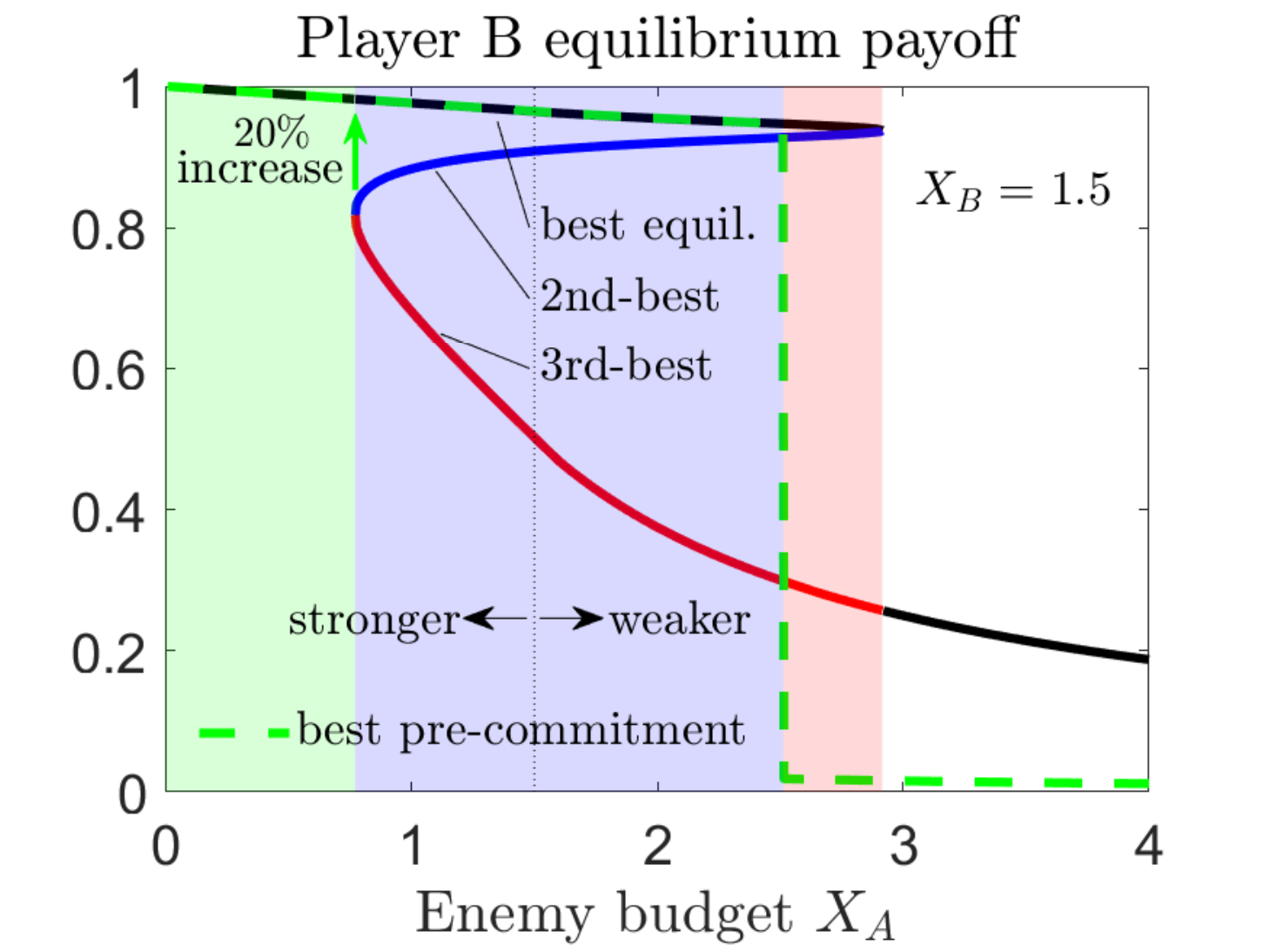}
		\caption{}\label{fig:GGL_equil_payoffs}
	\end{subfigure}
	\caption{Scenario 2: Asymmetric valuations. (a) The left diagram is the game $\text{GGL}(X_A,X_B,\alpha)$ under consideration. The right diagram shows the scenario where $B$ has the option to pre-commit resources to a battlefield. After the opponent's response, they play a GL game on the remaining battlefield.  (b) Parameter regions where a player in the GGL game has an incentive to pre-commit, based on two different criteria. The green region indicates parameters where there is a unique equilibrium payoff of the underlying GGL game, and there are beneficial pre-commitments that exceed this payoff. In the blue region, the GGL game admits multiple equilibria, and pre-commitments can exceed the second-highest equilibrium payoff (Theorem \ref{thm:second_best}). Multiple equilibria arise in the red region, but no such pre-commitments exist. 
		Observe that in this setting, incentives exist for a weaker player (blue region), whereas this was never the case for symmetric valuations.  (c) The equilibrium payoffs to player $B$ in the nominal (no pre-commitments) GGL game (traces out parameters of the dotted vertical line in (b)). The GGL game admits three equilibria in the blue and red regions, and the payoffs are ranked from best to worst for $B$. The dashed green line indicates the best payoff attainable from a pre-commitment. In this example, we note that it exceeds the best equilibrium payoff by an insignificant margin for $X_A \in [0,2.5]$ (approximately). However, it can improve upon the second-highest equilibrium payoff by up to 20$\%$.    }
	\label{fig:informal_thm2}
\end{figure*}

We also find that any beneficial pre-commitment forces player $A$ to withdraw, i.e. by pre-committing more than $X_A$ resources to a battlefield. Figure \ref{fig:GL_precommit_regions} depicts an example showing regions of the parameters $X_A,X_B$ where there exist beneficial pre-commitments for some valuation vector $\bs{v}$, under the restriction that the total value that $B$ pre-commits to does not exceed $0.55 \cdot \|\bs{v}\|_1$. Figure \ref{fig:GL_slice} shows the  improvement factor that an optimal pre-commitment offers, i.e. the best percent payoff gain over not pre-committing at all, over all possible valuations $\bs{v}$.  The main conclusion from the analysis of Scenario 1 is that a weaker player would never pre-commit resources.




\vspace{1mm}\noindent{\bf Scenario 2 -- Asymmetric battlefield valuations}: In this scenario, the battlefield valuations are asymmetric and relatively different across players. We will focus on a particular two-battlefield setup where player $A$'s valuation of the first battlefield is $v_{A,1} = \alpha \in (0,1/2]$, and player $B$'s valuation is $v_{B,1} = 1-\alpha$. The valuations are reversed on the second battlefield, i.e. $v_{A,2} = 1-\alpha$ and $v_{B,2} = \alpha$ (Figure \ref{fig:GGL_scenario}). Valuations asymmetries in Lotto games were studied in \cite{Kovenock_2020}, where the authors termed this broader class of games as ``Generalized" GL games. We thus denote our particular settings concisely with $\text{GGL}(X_A,X_B,\alpha)$.

The equilibria of Generalized GL games have been characterized by Kovenock and Roberson \cite{Kovenock_2020}. In particular, multiple payoff-distinct equilibria can arise, as players' relative valuations of battlefields can differ. Such equilibria in general are difficult to express, however, because they correspond to the zeros of a piece-wise continuous cubic polynomial.


Because multiple payoff-distinct equilibria can arise in the underlying GGL game, the criteria to evaluate whether pre-commitments offer any benefit differs from the criteria in Scenario 1. Our results (details in Theorem \ref{thm:second_best}) identify necessary and sufficient conditions on parameters for which pre-committing can outperform the second-highest equilibrium payoff of $\text{GGL}(X_A,X_B,\alpha)$.
\begin{result}
	In the General Lotto game with asymmetric valuations, pre-commitments can offer improvement over the second-highest equilibrium payoff in instances that admit multiple payoff-distinct equilibria. These incentives exist when the pre-committing player is either weaker or stronger. Furthermore, there are instances admitting a unique equilibrium where pre-committing offers improvement for a weaker player.
\end{result}
A weaker player can have incentives to pre-commit when valuations are asymmetric, but it never has incentives when valuations are symmetric (Result 1). Because valuations are asymmetric across players, a pre-commitment to the $1-\alpha$ battlefield can ensure the opponent does not devote any resources to the more valuable battlefield. The opponent is more likely to withdraw from such a pre-commitment because it is on its non-priority battlefield worth only  $\alpha$. Though our results focus on a two battlefield setting, the intuition suggests that beneficial pre-commitments can exist for the same reasons in more general settings.

Figure \ref{fig:GGL_precommit_regions} illustrates parameter regions where pre-commitments offer benefits over the second-highest equilibrium payoff when multiple equilibria exist, and over the unique equilibrium when only one exists. Figure \ref{fig:GGL_equil_payoffs} shows equilibrium payoffs in the GGL game, and the relative benefits that pre-committing offers.  The improvement factor over the best equilibrium, however, is insignificant (usually up to a $1\%$ improvement). Nonetheless, this suggests that pre-committing can serve as a mechanism to avoid low equilibrium payoffs in the nominal GGL game. That is, one can guarantee a payoff close to the best equilibrium payoff by pre-committing, whereas there is no such guarantee if players engage in the nominal GGL game.

\subsection{Related literature}

A primary line of research in Colonel Blotto games focuses on characterizing its equilibria. Since Borel's initial study \cite{Borel}, many works have advanced this thread over the last one hundred years \cite{Gross_1950,Roberson_2006,Schwartz_2014,Macdonell_2015,Thomas_2018,Kovenock_2020,Boix_2020}. As such, there are several variants of the Colonel Blotto game. The General Lotto game has been studied extensively \cite{Bell_1980,Myerson_1993,Hart_2008,Kovenock_2020}, and equilibria can be characterized for all instances. Due to its tractability, the General Lotto game, as well as other variants, are often adopted to study more complex adversarial environments. For instance, they have been used in engineering domains such as network security \cite{Fuchs_2012,Shahrivar_2014,Guan_2019} and the security of cyber-physical systems \cite{Gupta_2014a,Ferdowsi_2020}.


Informational elements in Colonel Blotto and General Lotto games constitute a significant theme in the literature. Our work is closest to an area that concerns similarly sequenced Blotto and Lotto games. In particular, \cite{Kovenock_2012} introduced a two-vs-one model, and identified when a (public) unilateral transfer of resources between coalitional players is beneficial. Subsequent work in \cite{Gupta_2014a,Gupta_2014b} considers similar settings where the two players can decide to add battlefields in addition to transferring resources amongst each other. Counter-intuitively, they showed that the players in the coalition achieve better performance if the transfers are made public to their adversary. Previous studies focused on settings where endogenously adding battlefields is costly \cite{Kovenock_2010_KER}. Pre-committing resources was considered in \cite{Vu_2021_favoritism}, but in a different context that involves favoritism. There, they studied a one-shot Blotto game with resources that are pre-allocated non-strategically to the battlefields. 

Incomplete information settings pertaining to payoff-relevant parameters in zero-sum games has received attention in recent years. The computation of security strategies in repeated, asymmetric information zero-sum games characterizes the balance of revealing private information through one's actions over time \cite{Li_2019,Li_2020,Kartik_2021asymmetric}. In Blotto and Lotto games, several papers study one-shot settings where players have uncertainty about each other's resource budgets. In \cite{Adamo_2009}, the players'  budgets are random variables drawn from a common distribution, and each player holds private information only about their own budget. Another recent formulation characterizes asymmetric Bayes-Nash equilibria when one player does not have knowledge about the opponent's budget, but its own budget is common knowledge \cite{Paarporn_2021_budget}.  Uncertainty about battlefield valuations has also been featured in the literature \cite{Kovenock_2011,Paarporn_2019,Ewerhart_2021}.



\section{Pre-commitments with symmetric valuations}

In this section, we seek conditions on the players' resource budgets $X_A,X_B$ for which one of the players has incentives to pre-commit resources in GL games with symmetric valuations. We begin with a basic primer on their equilibria.

\subsection{Primer on equilibria in GL games}

The two-player General Lotto game $\text{GL}(X_A,X_B,\bs{v})$ always admits a unique equilibrium payoff, which is given as follows. 
\begin{fact}
	The equilibrium payoff to player $i \in \{A,B\}$ in the game ${\rm GL}(X_A,X_B,\bs{v})$ is $\pi_i^\texttt{nom} := \phi \cdot L(X_i, X_{-i})$, where 
	\begin{equation} \label{eq:twoplayer_payoff}
	L(X_i, X_{-i}) := 
	\begin{cases} 
	\frac{X_i}{2X_{-i}}& \quad \text{if } 0 < X_i \leq X_{-i} \\
	1 - \frac{X_{-i}}{2X_i}& \quad \text{if } X_i > X_{-i}.
	\end{cases} 
	\end{equation}
	and $\phi := \sum_{b\in\mcal{B}} v_b$. The equilibrium payoff to player $-i$ is $\pi_{-i}^\texttt{nom} = \phi(1 - L(X_i,X_{-i}) )$.
\end{fact}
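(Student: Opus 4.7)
My proof strategy is to exhibit explicit equilibrium strategies whose induced payoffs match the claimed formula, and then invoke a standard best-response argument to conclude. The crucial structural observation is that the budget constraint in \eqref{eq:lotto_constraint} is on the \emph{expected} total spend, and the payoff depends only on the per-battlefield marginal distributions $F_{i,b}$ of each joint strategy $F_i$. Consequently, one can reduce the multi-battlefield game to a design problem over marginals $\{F_{i,b}\}_{b \in \mcal{B}}$ subject to $\sum_{b} \mbb{E}_{F_{i,b}}[x_{i,b}] \leq X_i$, which behaves like a collection of coupled single-battlefield contests sharing a common expected budget.

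Next, I would write down candidate equilibrium marginals and verify they achieve the stated payoffs. Without loss of generality, assume $X_A > X_B$. On each battlefield $b$, let $F_{A,b}$ be uniform on $[0, 2 X_A v_b / \phi]$; summing the resulting means $X_A v_b / \phi$ over $b$ yields the total expected expenditure $X_A$, saturating \eqref{eq:lotto_constraint}. Let $F_{B,b}$ place an atom of mass $1 - X_B/X_A$ at zero and spread the remaining mass uniformly on the same interval $[0, 2 X_A v_b / \phi]$, giving per-battlefield mean $X_B v_b / \phi$ and total $X_B$. A direct calculation using these uniform marginals shows that player $A$ wins battlefield $b$ with probability $1 - X_B/(2 X_A)$, producing aggregate payoff $\phi \bigl( 1 - X_B/(2 X_A) \bigr) = \phi \cdot L(X_A, X_B)$, and by the zero-sum property the payoff to $B$ is $\phi \cdot L(X_B, X_A)$.

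The remaining step is to certify that the proposed strategies indeed form a Nash equilibrium. The cleanest route is to show that each player's marginal keeps the opponent indifferent across every pure allocation inside the support of its own candidate strategy, so that any reallocation of expected resources across battlefields (subject to the budget) yields the same expected payoff; deviations whose support reaches beyond $2 X_A v_b / \phi$ on some battlefield $b$ are ruled out by explicitly computing the induced contest payoff and showing strict inferiority of overbidding. The main obstacle I anticipate is handling deviations to arbitrary \emph{joint} distributions rather than to independent product distributions. This is resolved by the opening observation that both the payoff and the budget constraint depend on the joint strategy only through its per-battlefield marginals, so the search for profitable deviations can be restricted to choices of marginals. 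The case $X_A \leq X_B$ follows by relabeling, and the boundary $X_A = X_B$ by continuity of both branches of $L$, yielding the piecewise formula \eqref{eq:twoplayer_payoff}.
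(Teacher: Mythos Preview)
The paper does not prove this statement; it is recorded as a known Fact with citations to prior literature (Hart; Kovenock and Roberson), and the very next Fact in the paper states---again without proof---exactly the equilibrium marginals you construct. Your argument is correct and is essentially the standard derivation from those references: the reduction to marginals is valid because both the expected-budget constraint \eqref{eq:lotto_constraint} and the expected payoff depend on a joint strategy only through its one-dimensional marginals, and the indifference plus no-overbidding verification goes through as you outline. One small remark: your supports $[0,2X_A v_b/\phi]$ are correctly normalized so that the expected totals equal $X_A$ and $X_B$; the paper's statement of the equilibrium marginals writes the supports as $[0,2X_i v_b)$, which implicitly takes $\phi=1$.
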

An admissible strategy for $i \in \{A,B\}$ is any $n$-variate distribution $F_i$ over $\mathbb{R}_+^n$ that satisfies the budget constraint in expectation \eqref{eq:lotto_constraint}. The equilibrium strategies for the GL game are detailed below.
\begin{fact}[GL equilibrium strategies \cite{Hart_2008,Kovenock_2020}]\label{fact:GL_strats}
	Suppose player $i \in \{A,B\}$ is the stronger player, i.e. $X_i \geq X_{-i}$. Then the players' equilibrium marginal (cumulative) distribution on resource allocation for battlefield $b \in \mcal{B}$ is given by
	\begin{equation}
	\begin{aligned}
	F_{i,b}(x) &= \frac{x}{2X_i v_b}, \quad x \in [0,2X_i v_b) \\
	F_{-i,b}(x) &= 1-\frac{X_{-i}}{X_i} + \frac{X_{-i}}{2X_i^2 v_b}x, \quad x \in [0,2X_i v_b) \\
	\end{aligned}
	\end{equation}
\end{fact}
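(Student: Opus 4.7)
The plan is to verify the proposed marginal CDFs form mutual best responses by (i) matching each player's total expected expenditure to its own budget, so that \eqref{eq:lotto_constraint} holds for any joint distribution with these marginals; and (ii) checking a Lagrangian indifference condition: against the opponent's marginal on battlefield $b$, a player's expected payoff from a pure bid $x$ is affine in $x$ with slope \emph{independent of $b$}. Property (ii) is exactly the condition for each player to be indifferent across bids in the common support when the budget is enforced only in expectation. Because \eqref{eq:lotto_constraint} constrains only expectations, any joint distribution with the prescribed marginals (for concreteness, take the product coupling across battlefields) is an admissible strategy; so characterizing the marginals is sufficient.

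First, I would verify the expected-spending identities by direct integration. Under $F_{i,b}$---uniform on $[0,2X_iv_b]$---one gets $\mathbb{E}[x_{i,b}]=X_iv_b$; under $F_{-i,b}$, the atom of mass $1-X_{-i}/X_i$ at zero plus the linear segment gives $\mathbb{E}[y_{-i,b}]=X_{-i}v_b$. Summing over $b$ (under the standard normalization $\sum_b v_b=1$, or after rescaling $\bs v$) yields total expected expenditures $X_i$ and $X_{-i}$ respectively. Next, I would compute per-battlefield expected payoffs. For player $i$ bidding $x\in[0,2X_iv_b]$ against $F_{-i,b}$, the contribution is $v_bF_{-i,b}(x) = v_b(1-X_{-i}/X_i) + (X_{-i}/(2X_i^2))\,x$, affine in $x$ with slope $X_{-i}/(2X_i^2)$; for player $-i$ bidding $y\in[0,2X_iv_b]$ against $F_{i,b}$, the contribution is $y/(2X_i)$, affine with slope $1/(2X_i)$. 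These slopes are precisely the shadow prices $\lambda_i,\lambda_{-i}$ that equalize marginal payoff-per-resource across all battlefields and all bids in the support, so every allocation supported inside $\bigcup_b[0,2X_iv_b]$ solves the Lagrangian relaxation---and hence the original problem, by complementary slackness with the expectation constraint active.

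The main obstacle is the outside-support direction. For a bid $x>2X_iv_b$ on battlefield $b$, the opponent's marginal CDF saturates at one, so additional resources yield no extra expected value while still incurring the shadow cost $\lambda_i x$; reallocating the excess onto another battlefield (or pulling it out of play entirely) strictly improves the Lagrangian objective, so no such deviation is profitable. Once the best-response property is confirmed on and off support, the proposed marginals form an equilibrium; a sanity check plugging $x=2X_iv_b$ into $v_b F_{-i,b}$ and summing over battlefields recovers the stronger player's payoff as $\phi(1-X_{-i}/(2X_i))$, matching Fact 2.1. The degenerate case $X_i=X_{-i}$ collapses both marginals to the uniform distribution on $[0,2X_iv_b]$, in which the same argument goes through with $\lambda_i=\lambda_{-i}=1/(2X_i)$.
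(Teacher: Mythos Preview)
The paper does not prove this statement: Fact~\ref{fact:GL_strats} is quoted from \cite{Hart_2008,Kovenock_2020} and is used only for background, with the subsequent analysis relying on the payoff formula~\eqref{eq:twoplayer_payoff} rather than on the explicit marginals. So there is no ``paper's own proof'' to compare against.

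Your verification is correct and is essentially the argument given in the cited sources. The shadow-price/indifference computation is the standard route: the slopes $X_{-i}/(2X_i^2)$ and $1/(2X_i)$ you extract are exactly the Lagrange multipliers $\lambda_i,\lambda_{-i}$ appearing in \cite{Kovenock_2020} (and specialize the $\lambda_A^*,\lambda_B^*$ of Fact~\ref{fact:GGL_strats} to the symmetric-valuation case). Two small points worth tightening. First, the normalization: as written in the Fact, the marginals bind the budget only when $\sum_b v_b=1$; you flag this, but it would be cleaner to carry $\phi$ through and note that the support endpoint should read $2X_i v_b/\phi$ in the unnormalized case. Second, your Lagrangian-to-primal step (``complementary slackness with the expectation constraint active'') is correct but deserves one explicit sentence: since the candidate strategy attains the Lagrangian maximum \emph{and} meets the budget with equality, weak duality gives optimality in the constrained problem; the opponent cannot do strictly better because any feasible deviation is also Lagrangian-feasible and hence bounded above by the same value. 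With those two clarifications the argument is complete.
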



The stronger player uses uniform distributions on allocations for each battlefield, and the weaker player uses uniform distributions with the same support, but combined with a point mass at zero. In essence, the stronger player competes more aggressively on each battlefield because the weaker player ``gives up" on each battlefield independently with probability $1-\frac{X_{-i}}{X_i}$.

In the forthcoming analysis on pre-commitments in GL games, we will not need to make explicit use of the details of Fact \ref{fact:GL_strats} since we are concerned only with the associated payoffs in the sequence of events outlined in Section \ref{sec:pc_model}. Our results thus rely heavily on the characterization of equilibrium payoffs in  \eqref{eq:twoplayer_payoff}.

\subsection{Beneficial pre-commitments in GL games}


Recall the pre-commitment model of Section \ref{sec:pc_model} (depicted in Figure \ref{fig:precommit_diagram}). Here, we assume that any valuation vector $\bs{v}$ has a fixed total value $\phi=\sum_{b\in\mcal{B}} v_b$, and we denote $\mcal{V}$ as the set of all such valuation vectors. Suppose there is also a restriction $\sum_{b\in\mcal{P}} v_b \leq \bar{v}$ imposed on the total value of the battlefields $\mcal{P}$ that $B$ can pre-commit to, where $\bar{v}\in [0,\phi]$ is a fixed and exogenous parameter. We will refer to $\bar{v}$ as the \emph{limit value}.

\begin{figure}[t]
	\centering
	\begin{subfigure}{.32\textwidth}
		\centering
		\includegraphics[scale=.25]{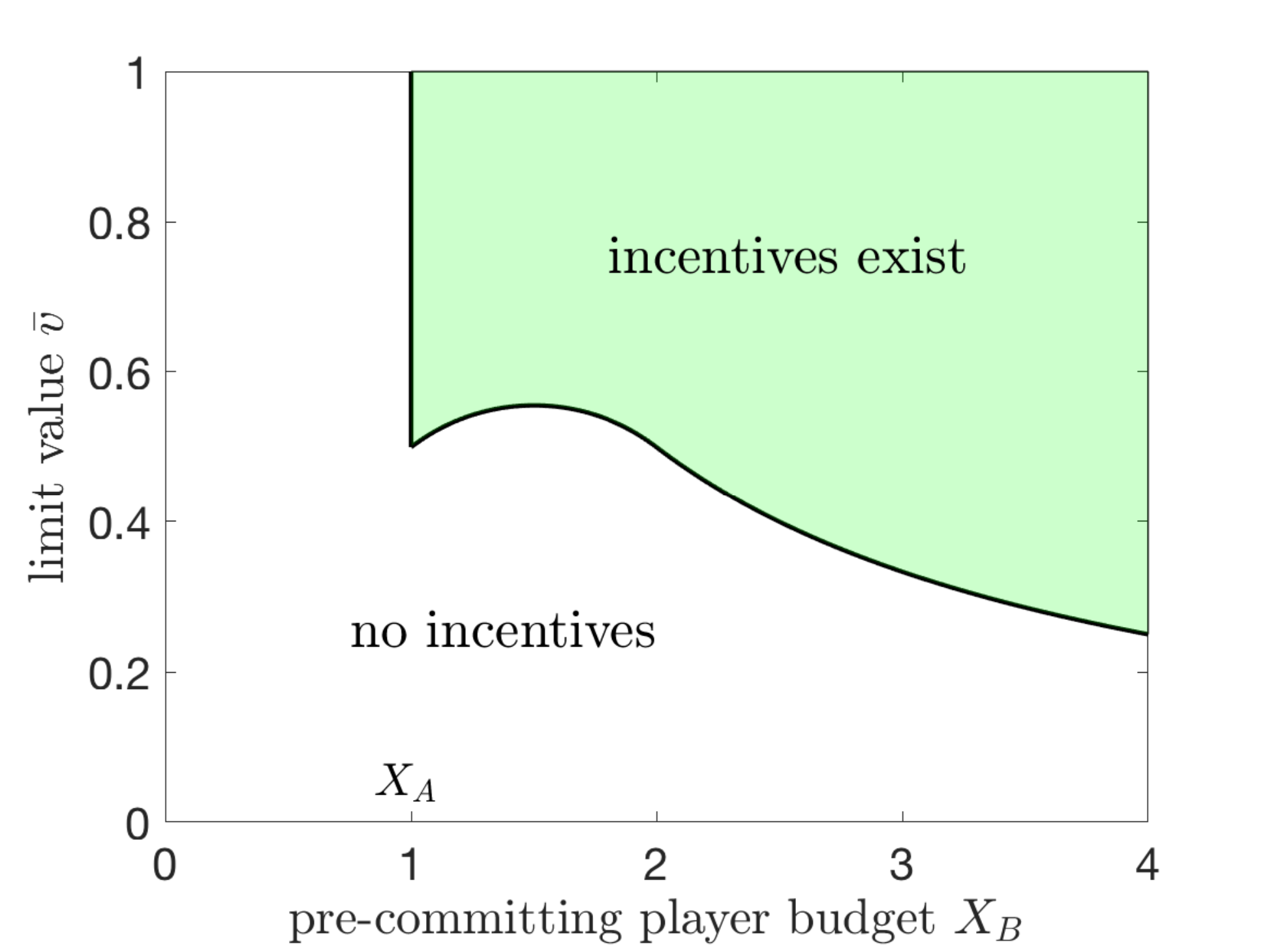}
		\caption{ }\label{fig:GL_full_regions}
	\end{subfigure}
	\begin{subfigure}{.32\textwidth}
		\centering
		\includegraphics[scale=.25]{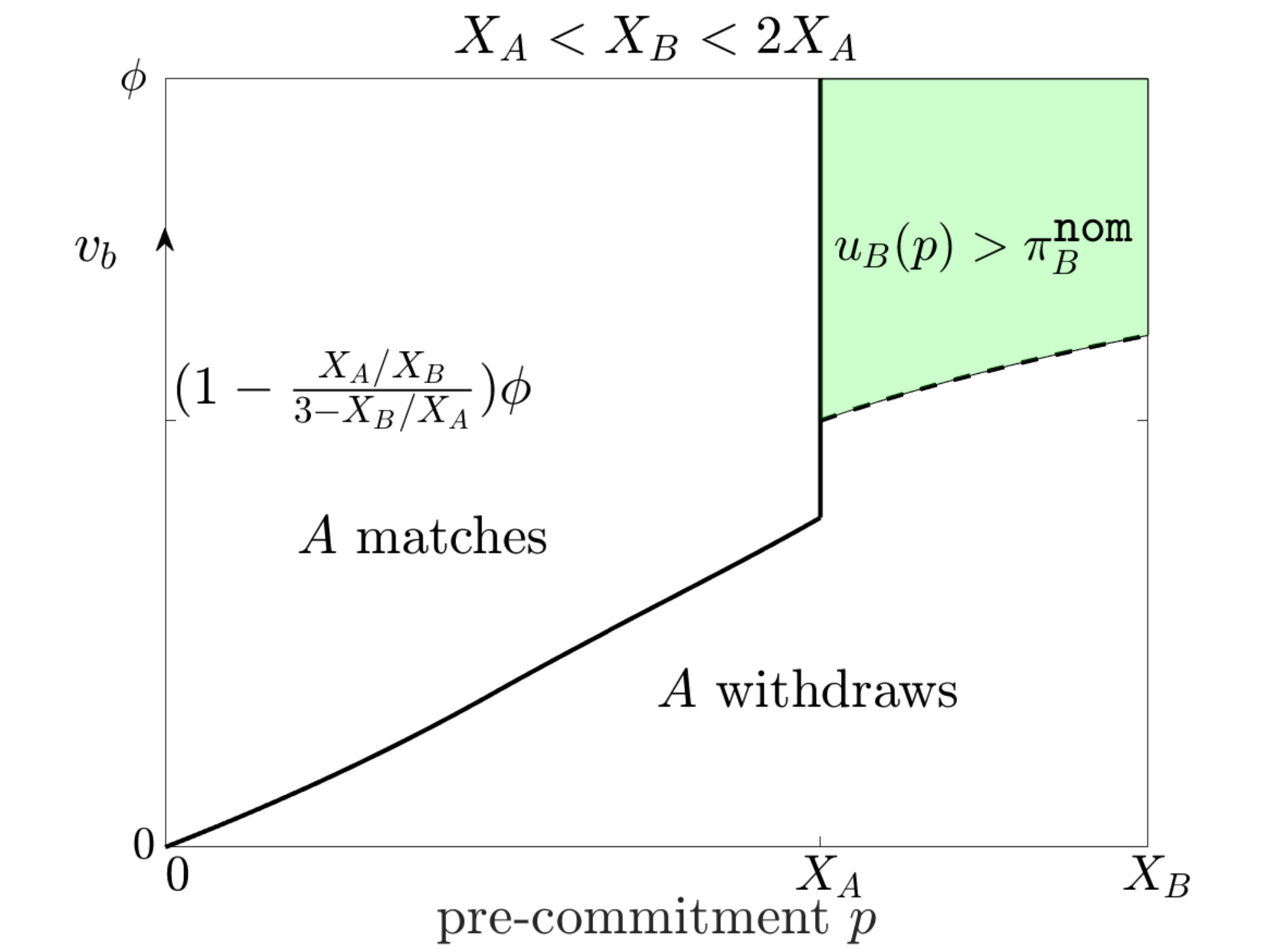}
		\caption{}\label{fig:GL_t_regions1}
	\end{subfigure}  
	\caption{(a) Full parameter region (in green) where there exists beneficial pre-commitments in $\text{GL}(X_A,X_B,\bs{v})$ (from Theorem \ref{thm:GL_result}). Here, we set $\phi=1$. (b,c) These diagrams depict regions of player A's optimal response against any pre-commitment $p \in [0,X_B]$ on a single battlefield of value $v_b \in [0,\phi]$ (suppose $\bar{v} = \phi$ here), and the total value of battlefields is $\phi$. (b) When $X_A<X_B<2X_A$, there is an incentive to pre-commit to a single battlefield $b$ if and only if $v_b > (1 - \frac{X_A/X_B}{3-X_B/X_A})\phi$ (green region). The set of beneficial pre-commitments are shown as the shaded green region. The exact analytical characterizations of the border lines are given in the proof of Theorem \ref{thm:GL_result} (Appendix).}
	\label{fig:GL_stronger_regions}
\end{figure}


Player $A$ observes the pre-commitment strategy $\bs{p}$. For each battlefield $b\in\mcal{P}$, $A$ must decide either to \emph{match} the pre-commitment with $p_b$ of its own resources -- thus securing the value $v_b$ for itself -- or to \emph{withdraw} from battlefield $b$ entirely.  After $A$ makes a response for each battlefield $b\in\mcal{P}$, both players then engage in a GL game with their remaining resources over the remaining set of battlefields $\mcal{B}\backslash\mcal{P}$. Specifically, if player $A$ decides to match pre-commitments on battlefields $\mcal{M} \subseteq \mcal{P}$, it obtains the payoff
\begin{equation}\label{eq:A_match}
\begin{aligned}
u_{A,\mcal{M}}(\bs{p};\bs{v}) := v_\mcal{M} + (\phi-v_\mcal{M})\cdot L(X_A-p_\mcal{M},X_B-p_\mcal{P})
\end{aligned}
\end{equation}
where we henceforth denote $v_\mcal{S} = \sum_{b\in\mcal{S}} v_b$ and $p_\mcal{S} = \sum_{b\in\mcal{S}} p_b$ for any subset $\mcal{S} \subseteq \mcal{B}$. If $\mcal{M} = \varnothing$, then $v_\mcal{M} = 0$ and $p_\mcal{M} = 0$. If $p_b > X_A$, then $A$ cannot match the pre-commitment and is forced to withdraw from battlefield $b$. We assume player $A$ makes an optimal response to $\bs{p}$ and obtains the subsequent payoff
\begin{equation}\label{eq:A_twoPlayer_BR}
u_A(\bs{p};\bs{v}) := \max_{\mcal{M}\subseteq\mcal{P}}\left\{ u_{A,\mcal{M}}(\bs{p};\bs{v}) : p_\mcal{M} \leq X_A \right\} .
\end{equation}
Player $B$ then obtains the subsequent payoff $u_B(\bs{p};\bs{v}) := \phi - u_A(\bs{p};\bs{v})$. Alternatively, player $B$ has the option to not pre-commit at all, wherein both players simply engage in the \emph{nominal}  game $\text{GL}(X_A,X_B,\bs{v})$ in simultaneous play over the full set of battlefields $\mcal{B}$. Under this option, the players derive the equilibrium payoffs $\pi_i^{\texttt{nom}} = \phi\cdot L(X_i,X_{-i})$ from \eqref{eq:twoplayer_payoff}.

We are interested in identifying conditions on the budgets $X_A,X_B$ and on the limit value $\bar{v}$ for which player $B$ has an incentive to pre-commit. A formal definition of when an incentive exists is given below. 
\begin{definition}
	We say player $B$ has an \emph{incentive to pre-commit} if there exists a subset $\mcal{P}\subseteq \mcal{B}$ satisfying $\sum_{b\in\mcal{P}} v_b \leq \bar{v}$, and a pre-commitment $\bs{p}$  such that $u_B(\bs{p};\bs{v}) > \pi_B^{\texttt{nom}}$ for some set of battlefield valuations $\bs{v} \in \mcal{V}$.
\end{definition}
The following lemma establishes that player $B$ weakly prefers to pre-commit to a single high-value battlefield than to many battlefields whose total value is the same as the single battlefield.

\begin{figure*}
	\centering
	\begin{subfigure}{.32\textwidth}
		\centering
		\includegraphics[scale=0.18]{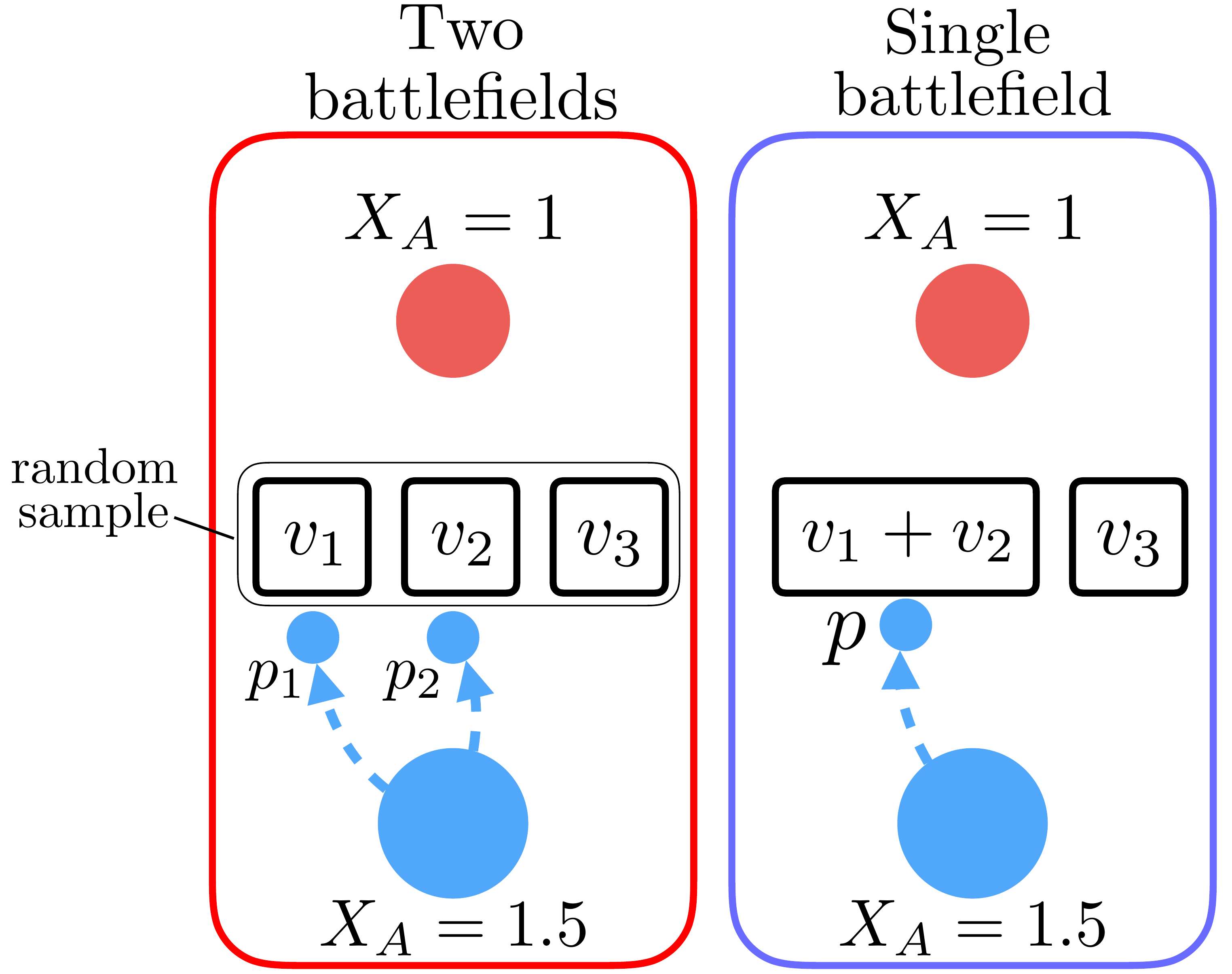}
		\caption{}\label{fig:sim_scenario}
	\end{subfigure}
	\begin{subfigure}{.32\textwidth}
		\centering
		\includegraphics[scale=0.3]{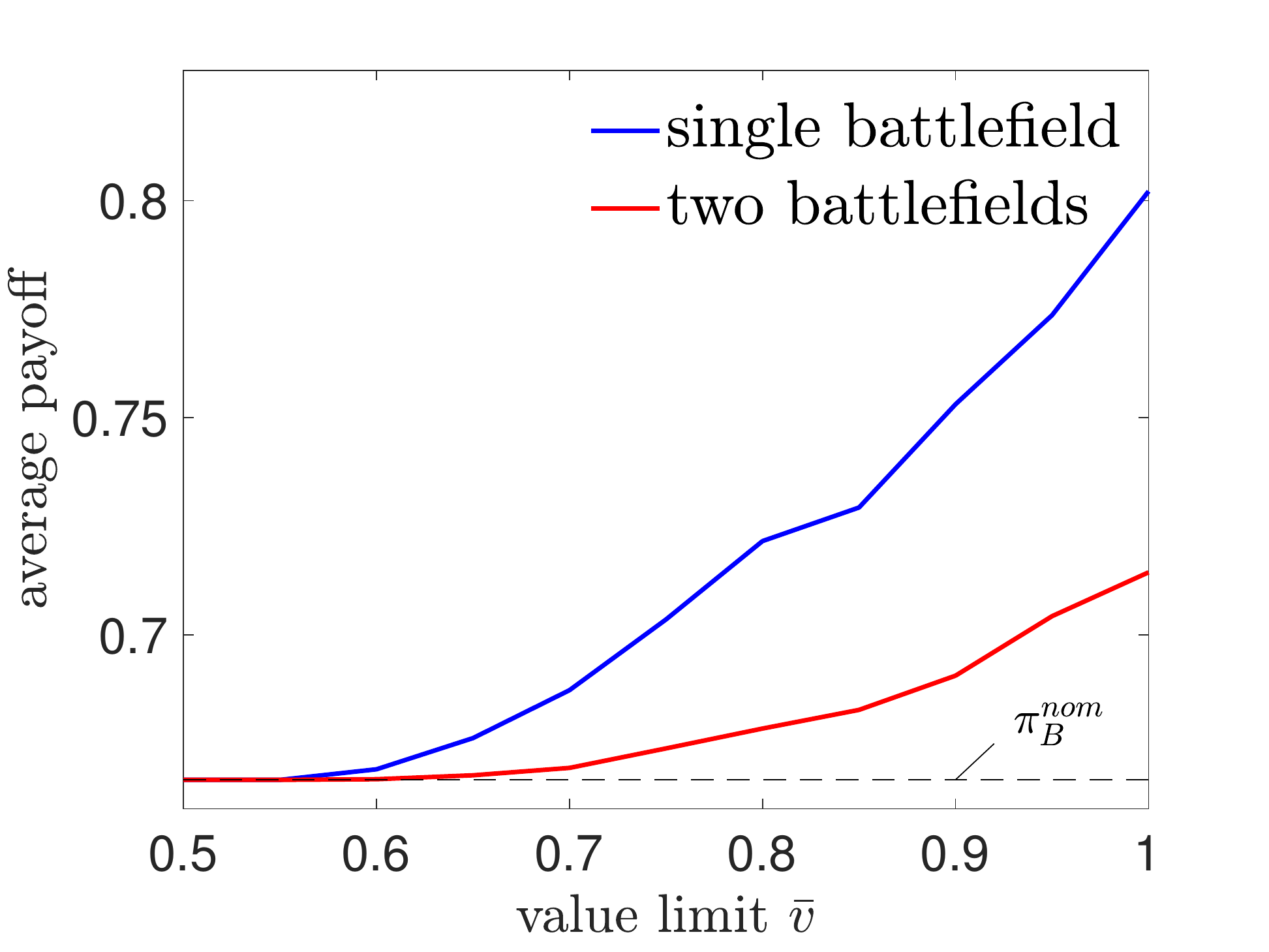}
		\caption{}\label{fig:GL_sim}
	\end{subfigure}
	\begin{subfigure}{.32\textwidth}
		\centering
		\includegraphics[scale=0.3]{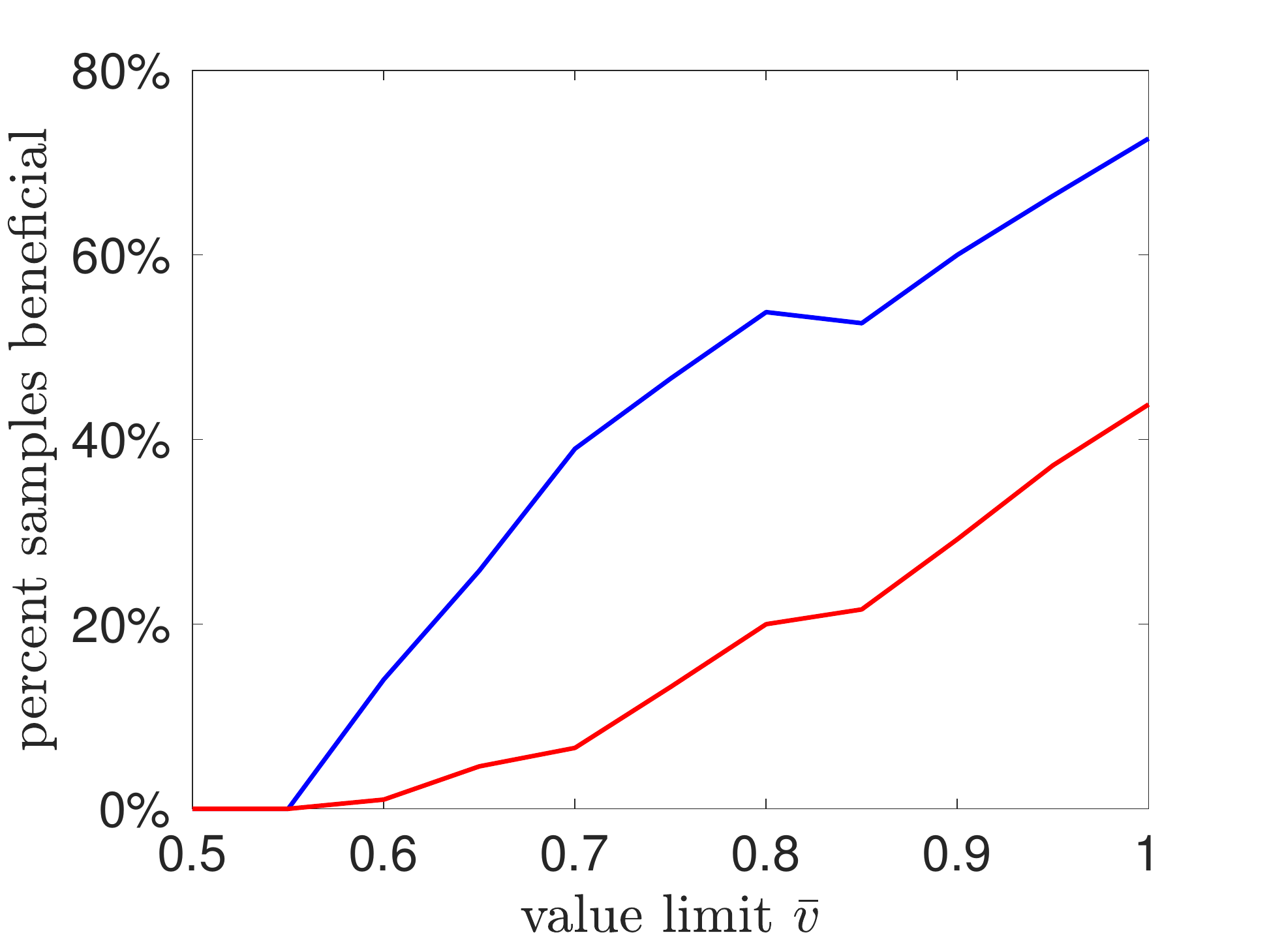}
		\caption{}\label{fig:GL_sim_percent_benefit}
	\end{subfigure}
	\caption{(a) Diagram of our simulation setup. First, a random set of valuations $\bs{v} \in \mcal{V}$ ($n=3$) is drawn from the uniform distribution on $\mcal{V}$. We calculate the optimal two-battlefield pre-commitment to the first two battlefields. We then calculated the optimal pre-commitment to a single battlefield of value $v_1+v_2$. Here, $X_B = 1.5$ and $X_A = 1$. (b) We performed calculations on 500 independent samples of $\bs{v}$ (for each $\bar{v}$ ranging from 0:0.05:1), with $\phi = 1$ and $n=3$. This plot shows the average payoff obtained from pre-committing to a single battlefield (blue) vs pre-committing to two battlefields (red). (c) This plot shows the percent of samples for which the optimal pre-commitment provided benefits over the nominal payoff. Note that no benefits are available for any $\bar{v} < 5/9$ (Theorem \ref{thm:GL_result}). }
	\label{fig:GL_sims}
\end{figure*}

\begin{lemma}\label{lem:single_t_optimal}
	Consider any instance of battlefield valuations $\bs{v}\in\mcal{V}$.
	For any pre-commitment $\bs{p}$ on a subset $\mcal{P}\subseteq \mcal{B}$ of battlefields (satisfying $v_\mcal{P} \leq \bar{v}$), there is a pre-commitment $p'\in [0,X_B]$ to a single battlefield of value $v_\mcal{P}$ that belongs to a corresponding set of valuations $\bs{v}' \in \mcal{V}$ such that
	\begin{equation}
	u_B(\bs{p};\bs{v}) \leq u_B(p';\bs{v}').
	\end{equation}
\end{lemma}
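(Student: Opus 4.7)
The plan is to show that a multi-battlefield pre-commitment strictly hands player $A$ more response options than a single-battlefield pre-commitment of the same total resources and same total pre-committed value, and that these extra options can only hurt $B$.

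Concretely, given $\bs{p}$ on $\mcal{P}$, let $p = p_\mcal{P}$ and $w = v_\mcal{P}$. I would construct $\bs{v}' \in \mcal{V}$ by replacing the $|\mcal{P}|$ pre-committed battlefields with a single battlefield of value $w$ and keeping the remaining battlefields with total value $\phi - w$ unchanged. The candidate single pre-commitment is $p' = p$. Since the equilibrium payoff of the subsequent GL game depends only on the total value of the remaining battlefields (Fact 2.1) and on the residual budgets, the detailed split of the remaining $\phi - w$ value across $\bs{v}'$ is irrelevant; any such $\bs{v}'$ works.

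Next, I would write out player $A$'s two feasible responses in the single-battlefield case using \eqref{eq:A_match}: \emph{match} yields $u_A^{\text{match}} = w + (\phi - w)\,L(X_A - p,\, X_B - p)$ (available only if $p \leq X_A$), while \emph{withdraw} yields $u_A^{\text{wdr}} = (\phi - w)\,L(X_A,\, X_B - p)$. The key observation is that these two numbers coincide exactly with the values of $u_{A,\mcal{P}}(\bs{p};\bs{v})$ and $u_{A,\varnothing}(\bs{p};\bs{v})$ in the multi-battlefield problem, because $p_\mcal{P} = p$, $v_\mcal{P} = w$, and the residual GL game has the same budgets $(X_A - p, X_B - p)$ or $(X_A, X_B - p)$ and the same total residual value $\phi - w$ in both formulations.

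Finally, by \eqref{eq:A_twoPlayer_BR}, $u_A(p';\bs{v}') = \max\{u_A^{\text{match}}, u_A^{\text{wdr}}\}$ is a maximum over a subset of the feasible options $\{\mcal{M} \subseteq \mcal{P} : p_\mcal{M} \leq X_A\}$ that defines $u_A(\bs{p};\bs{v})$, so $u_A(p';\bs{v}') \leq u_A(\bs{p};\bs{v})$. Since $u_B = \phi - u_A$, this gives $u_B(p';\bs{v}') \geq u_B(\bs{p};\bs{v})$, as desired. The only point of care is the budget feasibility edge case $p > X_A$: here matching is simply infeasible in both formulations (and $A$'s only feasible strategies in multi-battlefield are proper subsets $\mcal{M}\subsetneq\mcal{P}$ with $p_\mcal{M} \leq X_A$), but the $\mcal{M} = \varnothing$ option in multi-battlefield still dominates the forced withdrawal in single-battlefield, so the inclusion argument goes through verbatim. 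I do not expect any step to be a substantive obstacle; the entire argument is an ``$A$ has fewer moves, so $B$ does at least as well'' comparison, with the modest bookkeeping being to verify that the two single-battlefield payoff expressions align exactly with the $\mcal{M} = \mcal{P}$ and $\mcal{M} = \varnothing$ payoffs in the multi-battlefield problem.
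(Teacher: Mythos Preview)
Your argument is correct and is in fact cleaner than the paper's. You observe that under the single--battlefield construction with $p' = p_\mcal{P}$ and merged value $w = v_\mcal{P}$, player $A$'s two responses (match, withdraw) yield \emph{exactly} the payoffs of the extreme responses $\mcal{M} = \mcal{P}$ and $\mcal{M} = \varnothing$ in the multi--battlefield problem, and since these are a subset of $A$'s feasible response set there, the maximum can only drop. With $u_B = \phi - u_A$ this gives the inequality in one line.

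The paper takes a different route: it first argues that one may reduce to $|\mcal{P}| = 2$ by grouping the matched and unmatched battlefields, and then splits into two cases based on whether $p_1 + p_2 > X_A$. In the first case it uses $p' = p_1 + p_2$ (same as you). In the second case it does \emph{not} use $p' = p_1 + p_2$; instead it exploits monotonicity and continuity of $u_{A,\{1\}}$ and $u_{A,\{1,2\}}$ in $p_2$ to locate an indifference point $p_2^* \le p_2$, and takes $p' = p_1 + p_2^*$ so that $A$'s optimal response becomes ``match.'' Your inclusion argument sidesteps all of this case analysis by noting that $p' = p_\mcal{P}$ already suffices regardless of $A$'s optimal response, because you only need the two extreme responses to be available in the larger problem --- you never need to know which one (or which intermediate $\mcal{M}$) $A$ actually picks. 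The paper's construction does produce a sometimes smaller $p'$, but this buys nothing for the lemma as stated.
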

\begin{proof}
	If player $A$'s optimal response is to match or withdraw on all battlefields $b \in \mcal{P}$, then consider an instance $\bs{v}'\in\mcal{V}$ where one of its battlefields has value $v_\mcal{P}$. The  pre-commitment $p_\mcal{P}$ to this single battlefield elicits the same response and hence identical payoffs. If $A$'s optimal response is not to match or withdraw on all $b \in \mcal{P}$, then we can focus exclusively on the case $|\mcal{P}|=2$. To see this, suppose player $A$'s optimal response is to match on a subset $\mcal{M} \subset \mcal{P}$. Consider another instance of battlefields $\bs{v}'$ where $v_1' = v_\mcal{M}$ and $v_2' = v_{\mcal{P}\backslash \mcal{M}}$, and a pre-commitment $(p_1',p_2') = (p_\mcal{M},p_{\mcal{P}\backslash \mcal{M}})$. Player $A$'s optimal response is to match on $v_1'$ and withdraw on $v_2'$, giving the same subsequent payoff $u_B(\bs{p};\bs{v})$.
	
	So, let us consider a pre-commitment $\bs{p} = (p_1,p_2)$ to $\bs{v}$. Suppose $A$'s optimal response is to match $v_1$ and to withdraw from $v_2$ (w.o.l.o.g). First, let us assume that $p_1 + p_2 > X_A$, so that $A$ cannot match on both battlefields. Here, $B$ is necessarily the stronger player. From optimality of $A$'s  response, it holds that 
	\begin{equation}\label{eq:callfold_ineq}
	(\phi-v_1-v_2)\cdot L(X_A,X_B-p_1-p_2) \leq u_{A}(\bs{p};\bs{v})
	\end{equation}
	In words, matching only on $v_1$ outperforms withdrawing from both $v_1$ and $v_2$. Let us consider an instance with a set of battlefields $\bs{v}'$, where $v'_1 = v_1+v_2$. The pre-commitment $p' = p_1+p_2$ only on the first battlefield forces player $A$ to withdraw from $v'_1$, since $p_1+p_2 > X_A$. The subsequent payoff is $u_A(p';\bs{v}') = (\phi-v_1-v_2)\cdot L(X_A,X_B-p_1-p_2)$, and from \eqref{eq:callfold_ineq} we obtain
	\begin{equation}
	u_B(\bs{p};\bs{v}) \leq u_B(p';\bs{v}') .
	\end{equation}
	Now, let us assume that $p_1+p_2 \leq X_A$, and suppose $A$'s optimal response is to match $v_1$ and to withdraw from $v_2$. Let us define $f(s) = u_{A,\{1\}}(p_1,s)$ and $g(s) = u_{A,\{1,2\}}(p_1,s)$ for $s \geq 0$. The value $f(s)$ is the payoff $A$ obtains from matching $v_1$ and withdrawing from $v_2$ against the pre-commitment $(p_1,s)$. It is a strictly increasing function of $s$. The value $g(s)$ is the payoff $A$ obtains from matching both pre-commitments, and is a strictly decreasing function of $s$. Note that $f$ and $g$ are continuous in $s$, and that $g(0) > f(0)$. By the assumption of $A$'s optimal response to $\bs{p} = (p_1,p_2)$, we have $f(p_2) \geq g(p_2)$. Also, from the assumption that $p_2 < X_A$, the function $\max\{f(s),g(s)\}$ is continuous on $s \in [0,p_2]$. Hence, there exists a unique pre-commitment $p_2^* \in [0,p_2]$ for which $f(p_2^*) = g(p_2^*)$, making $A$ indifferent between matching and withdrawing on the second battlefield. Since $p_2^* \leq p_2$, we have
	\begin{equation}\label{eq:inequality_seq}
	u_A(p_1,p_2^*;\bs{v}) = f(p_2^*) \leq f(p_2) = u_A(p_1,p_2;\bs{v})
	\end{equation}
	and we obtain $u_B(p_1,p_2^*;\bs{v}) \geq u_B(p_1,p_2;\bs{v})$. In words, player $B$ can weakly improve its payoff by lowering its pre-commitment to the second battlefield down to $p_2^*$, whereupon an optimal response for $A$ is to match both pre-commitments. Let us consider an instance with a set of battlefields $\bs{v}'$ where $v_1' = v_1+v_2$, and a pre-commitment $p' = p_1+p_2^*$ only on $v_1'$. It follows that player $A$'s optimal response is to match, where it derives the payoff $u_A(p';\bs{v}') = u_A(p_1,p_2^*;\bs{v})$. From \eqref{eq:inequality_seq}, we obtain $u_B(p_1,p_2;\bs{v}) \leq u_B(p';\bs{v}')$.
\end{proof}
Lemma \ref{lem:single_t_optimal} implies that if there are no incentives to pre-commit to a single battlefield of any value up to the limit $\bar{v}$ on any set of valuations $\bs{v} \in \mcal{V}$, then there is no incentive to pre-commit to multiple battlefields with total value up to $\bar{v}$. In seeking conditions that give player $B$ incentives to pre-commit, we can thus simplify the analysis by narrowing the focus to single battlefield pre-commitments.  The following result thus provides necessary and sufficient conditions on $X_A,X_B$, and $\bar{v}$ for which incentives to pre-commit exist on some valuation vector $\bs{v} \in \mcal{V}$.
\begin{theorem}\label{thm:GL_result}
	If $X_B < X_A$, then there is never an incentive for player $B$ to pre-commit resources. If $X_A \leq X_B < 2X_A$, then  player $B$ has incentives to pre-commit resources  
	if and only if $\bar{v} > (1 - \frac{X_A/X_B}{3-X_B/X_A})\phi $. If $X_B \geq 2X_A$, then player $B$ has incentives to pre-commit resources if and only if $\bar{v} > \frac{X_A}{X_B}\phi $. Moreover, any beneficial pre-commitment $p$ to a single battlefield satisfies $p > X_A$.
\end{theorem}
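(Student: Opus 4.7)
My plan uses Lemma \ref{lem:single_t_optimal} to reduce the question to pre-commitments $p \in [0, X_B]$ placed on a single battlefield of value $v_b \leq \bar{v}$. In this reduction, $A$'s optimal payoff is
\[
u_A(p;v_b) = \max\{u_M(p,v_b), u_W(p,v_b)\} \ \text{if } p \leq X_A, \qquad u_A(p;v_b) = u_W(p,v_b) \ \text{if } p > X_A,
\]
with $u_M(p,v_b) = v_b + (\phi - v_b) L(X_A - p, X_B - p)$ and $u_W(p,v_b) = (\phi - v_b) L(X_A, X_B - p)$. A beneficial pre-commitment exists for some $\bs{v} \in \mcal{V}$ if and only if there is an admissible pair $(p, v_b)$ with $u_A(p; v_b) < \pi_A^{\texttt{nom}} = \phi L(X_A, X_B)$.

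For the weak-player case $X_B < X_A$, every admissible $p$ satisfies $p \leq X_B < X_A$, so matching is feasible. Using that $A$ is stronger both before and after pre-committing, a short computation gives $L(X_A - p, X_B - p) \geq L(X_A, X_B)$, and then
\[
u_M - \pi_A^{\texttt{nom}} = v_b\bigl(1 - L(X_A - p, X_B - p)\bigr) + \phi\bigl(L(X_A - p, X_B - p) - L(X_A, X_B)\bigr) \geq 0
\]
yields $u_A \geq u_M \geq \pi_A^{\texttt{nom}}$, ruling out incentives when $B$ is weaker.

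For $X_A \leq X_B$, I would first prove the ``moreover'' claim by showing $u_A(p;v_b) \geq \pi_A^{\texttt{nom}}$ for every $p \leq X_A$. The idea is to derive thresholds $v_b^M(p), v_b^W(p)$ with $u_M \geq \pi_A^{\texttt{nom}} \Leftrightarrow v_b \geq v_b^M(p)$ and $u_W \geq \pi_A^{\texttt{nom}} \Leftrightarrow v_b \leq v_b^W(p)$; the two intervals cover $[0, \phi]$ exactly when $v_b^M(p) \leq v_b^W(p)$. I expect this comparison to reduce, after some algebra, to the feasibility bound $p \leq X_B$. In the regime $X_A \leq X_B < 2X_A$ a sub-case split between $p \leq X_B - X_A$ and $X_B - X_A < p \leq X_A$ is required because $L(X_A, X_B - p)$ changes branch; verifying the threshold comparison in the second sub-case amounts to showing that a convex quadratic in $p$ is non-positive on $[X_B - X_A, X_A]$, which can be done by checking non-positivity at the two endpoints. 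This is the step I anticipate to be the main algebraic obstacle.

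Finally, for $p > X_A$ player $A$ is forced to withdraw, so $u_A = u_W(p, v_b)$. I would verify that $u_W$ is strictly increasing in $p$ on $(X_A, X_B]$ across both branches of $L(X_A, X_B - p)$; hence its infimum is the right-limit at $p = X_A^+$, equal to $(\phi - v_b)(3X_A - X_B)/(2X_A)$ when $X_A \leq X_B < 2X_A$ and $(\phi - v_b) X_A / (2(X_B - X_A))$ when $X_B \geq 2X_A$. Imposing ``infimum $< \pi_A^{\texttt{nom}}$'' and solving for $v_b$ produces the two stated thresholds $\bigl(1 - \frac{X_A/X_B}{3 - X_B/X_A}\bigr)\phi$ and $\frac{X_A}{X_B}\phi$ respectively. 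The ``if'' direction is then witnessed by choosing $v_b$ slightly above the threshold and $p$ slightly above $X_A$, while the ``only if'' direction combines the strict monotonicity of $u_W$ with the $p \leq X_A$ analysis.
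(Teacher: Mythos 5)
Your proposal is correct and lands on the same skeleton as the paper's proof — reduction via Lemma \ref{lem:single_t_optimal} to a single battlefield, the weak-player case via monotonicity of $L(X_A-p,X_B-p)$ in $p$, and the thresholds $(1-\frac{X_A/X_B}{3-X_B/X_A})\phi$ and $\frac{X_A}{X_B}\phi$ obtained exactly as the paper does, from the right-limit of the forced-withdrawal payoff as $p\searrow X_A$ together with monotonicity of $u_W$ in $p$ on $(X_A,X_B]$. Where you genuinely depart is the treatment of $p\le X_A$ (the ``moreover'' clause): you fix $p$ and show the two half-lines $\{v_b\ge v_b^M(p)\}$ and $\{v_b\le v_b^W(p)\}$ cover $[0,\phi]$, so that $\max\{u_M,u_W\}\ge \pi_A^{\texttt{nom}}$ for every $v_b$, whereas the paper fixes the regime of $v_b$ (relative to $\frac{\gamma-1}{\gamma+1}\phi$ and $\frac{3-\gamma}{5-\gamma}\phi$, $\gamma=X_B/X_A$), locates player $A$'s indifference point $p^*$, argues $u_B$ is unimodal in $p$ with maximum at $p^*$, and compares $u_B(p^*)$ to $\pi_B^{\texttt{nom}}$. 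The algebraic cores are mirror images: your threshold comparison in the sub-case $X_B-X_A<p\le X_A$ is indeed a convex quadratic in $p$ (leading coefficient $2X_B-X_A>0$), and its endpoint check at $p=X_A$ reduces to $\gamma^2-\tfrac72\gamma+\tfrac52\le 0$ on $\gamma\in[1,2)$ — the very same quadratic the paper invokes — while the paper's analogue is a convex quadratic $G(v_b)$ checked at the endpoints of a $v_b$-interval; in the branch $p\le\min\{X_A,X_B-X_A\}$ your comparison collapses to $p\le X_B$ as you anticipated, which also disposes of the regime $X_B\ge 2X_A$ (where the paper omits details) uniformly. What your route buys is a cleaner, case-light proof of the ``moreover'' statement that does not require characterizing $A$'s best-response switching structure; what the paper's route buys is precisely that structure (the indifference point $p^*$ and the shape of $u_B(p)$), which it uses to describe the region of beneficial pre-commitments in Figure \ref{fig:GL_stronger_regions}. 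One shared loose end, not a defect of your argument relative to the paper's: at the boundary $X_A=X_B$ no admissible $p>X_A$ exists, a case the paper's proof also sidesteps by proving the statement for $X_A<X_B\le 2X_A$.
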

The proof is provided in the Appendix. This result places necessary and sufficient conditions on the limit value $\bar{v}$ for which there exist instances that offer incentives to pre-commit. That is, a stronger player has incentives to pre-commit, provided that the value of the battlefield it pre-commits to is sufficiently high and is within the limit value $\bar{v}$. Note that this result does not immediately imply that there exist beneficial pre-commitments to an arbitrary subset of battlefields whose total value satisfies this condition. Figure \ref{fig:GL_stronger_regions} illustrates the parameter region where pre-commitments are beneficial. Interestingly, the minimum value of $v_b$ for which there exist beneficial pre-commitments is not monotonic in the budget ratio $\gamma = X_B/X_A$. It is increasing on $\gamma \in (1,\frac{3}{2})$, taking a maximum value of $\frac{5}{9}\phi$ at $\gamma = \frac{3}{2}$, and decreasing on $\gamma \in (\frac{3}{2},\infty)$ (Figure \ref{fig:GL_full_regions}). Moreover, the minimum $v_b$ is at least $\phi/2$ when $X_A < X_B < 2X_A$. Player $B$ must pre-commit an amount of resources that exceeds player A's budget in order to benefit.

\subsection{Discussion and numerical studies}\label{sec:GL_sims}


Even if the condition on the limit value $\bar{v}$ in Theorem \ref{thm:GL_result} is satisfied, there may not exist any beneficial pre-commitment on a particular valuation vector $\bs{v}$. Sometimes, pre-committing to multiple battlefields is preferable over pre-committing to a single battlefield. For example, consider three battlefields each with value $1/3$, a limit value of $\bar{v} = 1$, $X_B > 3$, and $X_A = 1$. A pre-commitment that places $p_b = X_B/3 > X_A$ on each battlefield secures the entire value of 1, whereas a single pre-commitment cannot secure the entire value. Note this assertion is different from Lemma \ref{lem:single_t_optimal}, which establishes that any multi-battlefield pre-commitment weakly under-performs a pre-commitment to a single battlefield on \emph{some} instance of battlefields. 

We conducted numerical studies to verify Lemma \ref{lem:single_t_optimal} and to compare the overall performance of single battlefield pre-commitments to the performance of pre-commitments to two battlefields in a three-battlefield setting with fixed total value $\phi=1$. The limit value is varied from 0 to 1. For each particular limit value, we sampled 500 valuations $\bs{v}$ uniformly from $\mcal{V}$. For each sample, we evaluated the best pre-commitment $(p_1,p_2)$ to the first two battlefields. In addition, we evaluated the best pre-commitment to a single battlefield of value $v_1+v_2$, i.e. on a corresponding set of valuations $\bs{v}'$ with two battlefields $v_1' = v_1+v_2$ and $v_2' = v_3$. On average, single battlefield pre-commitments significantly outperform pre-commitments to two battlefields (see Figure \ref{fig:GL_sims}).

\section{Pre-commitments with asymmetric valuations}

In this section, we consider pre-commitments in two-player General Lotto games where the players' valuations of the battlefields are asymmetric and relatively different. The main finding in this section is that pre-committing can guarantee a payoff that outperforms the second-highest equilibrium payoff that is available in nominal play. We begin with preliminaries on equilibrium characterizations of our setup.

\subsection{Equilibria of the nominal game $\text{GGL}(X_A,X_B,\alpha)$}

Recall the two-battlefield setup specified as $\text{GGL}(X_A,X_B,\alpha)$ (Figure \ref{fig:GGL_scenario}). Specifically, $v_{A,1} = \alpha$, $v_{A,2} = 1-\alpha$, $v_{B,1} = 1-\alpha$, and $v_{B,2} = \alpha$ for some $\alpha \in (0,\frac{1}{2}]$. To proceeed, we first establish the following equilibrium characterizations.
\begin{fact}[Using results from \cite{Kovenock_2020}]
	Each equilibrium of $\text{GGL}(X_A,X_B,\alpha)$ corresponds to a zero of the function $\sf(\sigma)$, given by
	\begin{equation}\label{eq:soln_func}
	\sf(\sigma) := 
	\begin{cases}
	\sigma^2\left[ \sigma(\frac{(1-\alpha)^2}{\alpha} + \frac{\alpha^2}{1-\alpha}) - \frac{X_A}{X_B} \right], &\!\!\!\!\!\sigma \in (0,\frac{\alpha}{1-\alpha}) \\
	\frac{\alpha^2}{1-\alpha}(\sigma^3 - \frac{X_A}{X_B}) + \alpha\sigma(1 - \frac{X_A}{X_B}\sigma), &\!\!\!\!\!\sigma \in [\frac{\alpha}{1-\alpha},\frac{1-\alpha}{\alpha}) \\
	\sigma - \frac{X_A}{X_B}(\frac{\alpha^2}{1-\alpha} + \frac{(1-\alpha)^2}{\alpha} ), &\!\!\!\!\!\sigma \geq \frac{1-\alpha}{\alpha}
	\end{cases}
	\end{equation}
	For any zero $\sigma^*$, the expected equilibrium payoffs are given by
	\begin{equation}\label{eq:pi_GGL_nom}
	\begin{aligned}
	\pi_A(\sigma^*) &= 
	\begin{cases}
	\frac{\sigma^*}{2}(\frac{(1-\alpha)^2}{\alpha} + \frac{\alpha^2}{1-\alpha}), \quad &\sigma^* \in (0,\frac{\alpha}{1-\alpha}) \\ 
	1-\alpha-\frac{\alpha}{2\sigma^*}+\frac{\alpha^2\sigma^*}{2(1-\alpha)}, \quad &\sigma^* \in [\frac{\alpha}{1-\alpha},\frac{1-\alpha}{\alpha}) \\
	1 - \frac{1}{2\sigma^*}, \quad &\sigma^* \geq \frac{1-\alpha}{\alpha}
	\end{cases} \\
	\pi_B(\sigma^*) &=
	\begin{cases}
	1 - \frac{\sigma^*}{2}, &\sigma^* \in (0,\frac{\alpha}{1-\alpha}) \\
	1-\alpha-\frac{\alpha\sigma^*}{2}+\frac{\alpha^2}{2\sigma^*(1-\alpha)} , &\sigma^* \in [\frac{\alpha}{1-\alpha},\frac{1-\alpha}{\alpha}) \\
	\frac{1}{2\sigma^*}(\frac{(1-\alpha)^2}{\alpha} + \frac{\alpha^2}{1-\alpha}), &\sigma^* \geq \frac{1-\alpha}{\alpha}
	\end{cases} \\
	\end{aligned}
	\end{equation} 
	Note that $\sigma^* = \frac{X_A}{X_B}(\frac{(1-\alpha)^2}{\alpha} + \frac{\alpha^2}{1-\alpha})^{-1} < \frac{\alpha}{1-\alpha}$ is the unique root of $S$ on the interval $(0,\frac{\alpha}{1-\alpha})$, which gives $\pi_A(\sigma^*) = \frac{X_A}{2X_B}$. Additionally, $\sigma^* = \frac{X_A}{X_B}(\frac{(1-\alpha)^2}{\alpha} + \frac{\alpha^2}{1-\alpha}) \geq \frac{1-\alpha}{\alpha}$ is the unique root of $S$ on the interval $[\frac{1-\alpha}{\alpha},\infty)$, which gives $\pi_B(\sigma^*) = \frac{X_B}{2X_A}$.
\end{fact}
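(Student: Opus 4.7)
The plan is to specialize the general equilibrium characterization of asymmetric-valuation Lotto games from \cite{Kovenock_2020} to the two-battlefield instance $\text{GGL}(X_A,X_B,\alpha)$, and then extract the closed-form expressions by direct integration. The starting point is the univariate reduction: optimal play is pinned down by each player's marginal distributions on the two battlefields, and each marginal is a uniform distribution (potentially with an atom at zero) whose support endpoint on battlefield $b$ is proportional to $v_{i,b}/\lambda_{i}$, where $\lambda_{i}$ is the Lagrange multiplier attached to player $i$'s expected-budget constraint \eqref{eq:lotto_constraint}. After rescaling, the entire equilibrium problem reduces to solving a single algebraic equation in a scalar variable $\sigma$ representing the ratio of these multipliers.

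The first step is to identify which player carries an atom at zero on which battlefield, since this determines the algebraic form of $S(\sigma)$. The critical thresholds are $\sigma=\alpha/(1-\alpha)$ and $\sigma=(1-\alpha)/\alpha$, at which the relative support endpoints on the two battlefields cross. When $\sigma<\alpha/(1-\alpha)$, player $B$ places an atom at zero on both battlefields; when $\sigma\in[\alpha/(1-\alpha),(1-\alpha)/\alpha)$, only one battlefield carries such an atom; and when $\sigma\geq(1-\alpha)/\alpha$, player $A$ places an atom at zero on both battlefields. In each regime I would write out the two expected-budget integrals $\mathbb{E}_{F_i}[\sum_{b} x_{i,b}]$ explicitly, equate them to $X_A$ and $X_B$ respectively, and then form the appropriate ratio to eliminate the overall scale. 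The three algebraic expressions thus produced are exactly the three branches of \eqref{eq:soln_func}, and continuity of $S$ at the two thresholds provides a useful self-consistency check.

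With $S$ in hand, the expected equilibrium payoffs in \eqref{eq:pi_GGL_nom} follow at any zero $\sigma^{*}$ by computing $\mathbb{E}[v_{i,b}\cdot\mathbf{1}\{x_{i,b}>x_{-i,b}\}]$ against the uniform marginals, keeping track of the atom in each regime. For the uniqueness claims on the extreme intervals, the algebra is elementary: on $(0,\alpha/(1-\alpha))$ the factorization $S(\sigma)=\sigma^{2}[\sigma C - X_A/X_B]$ with $C:=\tfrac{(1-\alpha)^{2}}{\alpha}+\tfrac{\alpha^{2}}{1-\alpha}$ exhibits the unique positive root $\sigma^{*}=X_A/(X_B C)$, which substituted into the corresponding branch of \eqref{eq:pi_GGL_nom} yields $\pi_A(\sigma^{*})=\tfrac{\sigma^{*} C}{2}=X_A/(2X_B)$; symmetrically, on $[(1-\alpha)/\alpha,\infty)$ the function $S$ is linear in $\sigma$ with unique root $\sigma^{*}=X_A C/X_B$, giving $\pi_B(\sigma^{*})=X_B/(2X_A)$.

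The main obstacle is the bookkeeping needed to correctly instantiate the Kovenock--Roberson machinery for the specific $(\alpha,1-\alpha)$ valuation pattern across the three support-overlap regimes: one must identify the atom locations without error, compute the expected-budget integrals case by case, and verify that the algebra collapses to the clean piecewise form of \eqref{eq:soln_func}. No substantively new technique is required beyond faithful specialization of the cited results, so the proof is essentially a careful translation exercise rather than a fresh argument.
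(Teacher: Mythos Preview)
The paper does not prove this statement; it is presented as a Fact imported from \cite{Kovenock_2020} without derivation. Your plan to obtain it by specializing the general asymmetric-valuation equilibrium machinery of that reference to the two-battlefield $(\alpha,1-\alpha)$ instance is the natural route, and it is what the paper implicitly relies on. The reduction to a scalar equation in $\sigma=\lambda_B/\lambda_A$, the piecewise case split at $\sigma=\alpha/(1-\alpha)$ and $\sigma=(1-\alpha)/\alpha$, the extraction of payoffs by integrating against the uniform marginals, and the elementary algebra for the unique roots on the extreme intervals are all correct in outline.

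One concrete slip: you have the atom placements reversed in the extreme regimes. When $\sigma<\alpha/(1-\alpha)$, both battlefields satisfy $v_{B,b}/v_{A,b}\ge\sigma$, so both lie in $\Omega(\sigma)$; by the strategy profile in Fact~\ref{fact:GGL_strats} it is player $A$ (not $B$) who carries an atom at zero on each battlefield there, and symmetrically it is player $B$ who carries both atoms when $\sigma\ge(1-\alpha)/\alpha$. This is consistent with the payoff formulas: small $\sigma$ gives $\pi_B(\sigma)=1-\sigma/2$ near $1$, so $B$ is dominant and it is $A$ who concedes with positive probability. Your own warning that ``one must identify the atom locations without error'' applies precisely here; once corrected, the budget integrals will produce the three branches of \eqref{eq:soln_func} as stated, and the rest of your proposal goes through unchanged.
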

The function $\pi_A(\sigma^*)$ is increasing in $\sigma^*$, while $\pi_B(\sigma^*)$ is decreasing in $\sigma^*$. We will refer to $\sf$ as the \emph{solution function}. It has been established that $\sf(\sigma)$ is continuous and has at least one zero \cite{Kovenock_2020}. In the result below, we show that $\sf$ can either have a unique zero or have three zeros, depending on the parameters. Such regions are depicted in Figure \ref{fig:GGL}.

\begin{lemma}\label{lem:multiplicity_regions}
	The solution function $\sf$ has three zeros if and only if $\frac{X_B}{X_A} \leq 1$ and $\sf(\sigma_-) > 0$, or $\frac{X_B}{X_A} > 1$ and $\sf(\sigma_+) < 0$, where $\sigma_\pm := \frac{1-\alpha}{3\alpha}\left[\frac{X_A}{X_B} \pm \sqrt{\left(\frac{X_A}{X_B}\right)^2 - \frac{3\alpha}{1-\alpha} } \right]$. Otherwise, $\sf$ has a unique zero.
\end{lemma}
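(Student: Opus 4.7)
The plan is to decompose the analysis of $\sf$ into its three pieces, count zeros on each, and then combine using the continuity of $\sf$. On $(0,\alpha/(1-\alpha))$, $\sf(\sigma)=\sigma^2[K\sigma - X_A/X_B]$ with $K=(1-\alpha)^2/\alpha+\alpha^2/(1-\alpha)$, so for $\sigma>0$ it has a unique zero at $\sigma=X_A/(KX_B)$, present in this interval iff $X_A/X_B<K\alpha/(1-\alpha)$. On $[(1-\alpha)/\alpha,\infty)$, $\sf$ is linear with unit slope and contributes a unique zero at $\sigma=KX_A/X_B$, present iff $X_A/X_B\geq(1-\alpha)/(\alpha K)$. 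A direct algebraic check using $\alpha\in(0,1/2]$ shows $K\alpha/(1-\alpha)\leq 1\leq(1-\alpha)/(\alpha K)$, with equality only at $\alpha=1/2$, so at most one of the outer intervals can contain a zero.

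On the middle interval, $\sf$ is a cubic in $\sigma$ with positive leading coefficient $\alpha^2/(1-\alpha)$. Differentiating, the real critical points of this cubic (present iff $(X_A/X_B)^2\geq 3\alpha/(1-\alpha)$) are exactly $\sigma_-$ and $\sigma_+$, with $\sigma_-$ a local maximum and $\sigma_+$ a local minimum. Together with $\sf(\sigma)<0$ for $\sigma$ near $0^+$ and $\sf(\sigma)\to+\infty$ as $\sigma\to\infty$, the total number of zeros of $\sf$ is odd.

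Next, I would partition parameter space by the sign pattern $(\sf(a),\sf(b))$, where $a=\alpha/(1-\alpha)$ and $b=(1-\alpha)/\alpha$, into three cases: (i) $(+,+)$ when $X_A/X_B<K\alpha/(1-\alpha)$; (ii) $(\leq 0,+)$ when $K\alpha/(1-\alpha)\leq X_A/X_B<(1-\alpha)/(\alpha K)$; (iii) $(<0,\leq 0)$ when $X_A/X_B\geq(1-\alpha)/(\alpha K)$. Tracking the shape of the middle cubic in each case, the total zero count is always $1$ or $3$, with three zeros arising iff: $\sigma_+\in(a,b)$ and $\sf(\sigma_+)<0$ in case (i); $\sigma_\pm\in(a,b)$ with $\sf(\sigma_-)>0$ and $\sf(\sigma_+)<0$ in case (ii); and $\sigma_-\in(a,b)$ with $\sf(\sigma_-)>0$ in case (iii). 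The dichotomy $X_B/X_A\leq 1$ versus $X_B/X_A>1$ then splits these: case (i) together with the $X_A/X_B<1$ sub-branch of (ii) fall under $X_B/X_A>1$, whereas case (iii) and the $X_A/X_B\geq 1$ sub-branch of (ii) fall under $X_B/X_A\leq 1$.

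The final step, which I expect to be the main obstacle, is verifying that the single stated sign inequality ($\sf(\sigma_+)<0$ in the regime $X_B/X_A>1$, or $\sf(\sigma_-)>0$ in the regime $X_B/X_A\leq 1$) already subsumes the auxiliary location conditions on $\sigma_\pm$ and the companion sign. My plan for this reduction is to use the Vieta relations $\sigma_-+\sigma_+=\tfrac{2(1-\alpha)}{3\alpha}\tfrac{X_A}{X_B}$ and $\sigma_-\sigma_+=\tfrac{1-\alpha}{3\alpha}$, together with the quadratic identity $\tfrac{3\alpha}{1-\alpha}\sigma_\pm^2-2\tfrac{X_A}{X_B}\sigma_\pm+1=0$ satisfied at the critical points, to rewrite $\sf(\sigma_\pm)$ in a form that couples the stated sign inequality directly to the position of $\sigma_\pm$ relative to $(a,b)$. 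With that reduction in place, the lemma follows from the case-by-case counts above.
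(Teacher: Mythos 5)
Your skeleton is sound, and in places more systematic than the paper's own argument: the piece-by-piece zero count, the mutual exclusivity of the outer-interval zeros via $K\alpha/(1-\alpha)\leq 1\leq (1-\alpha)/(\alpha K)$, the identification of $\sigma_\pm$ as the cubic's local max/min, and the three-case sign-pattern analysis at the breakpoints $a=\frac{\alpha}{1-\alpha}$, $b=\frac{1-\alpha}{\alpha}$ all check out (the paper itself only works the regime $X_B/X_A\leq 1$ in detail and waves at the other). But the step you defer -- showing that the single stated inequality ($S(\sigma_-)>0$ when $X_B/X_A\leq 1$, $S(\sigma_+)<0$ when $X_B/X_A>1$) already implies the auxiliary conditions (critical points located in $(a,b)$ and the companion sign) -- is not a finishing touch; it is essentially the entire content of the lemma, and your proposal offers only a plan for it. The paper's proof is devoted precisely to this verification for $X_B/X_A\leq 1$: it shows $S(\sigma_-)>0$ forces $\sigma_-\in[\frac{\alpha}{1-\alpha},\frac{1-\alpha}{\alpha})$ via explicit bounds on $X_A/X_B$, and then, instead of locating $\sigma_+$, exhibits a point $\hat\sigma\in(\sigma_-,\frac{1-\alpha}{\alpha}]$ with $S(\hat\sigma)<0$ -- taking $\hat\sigma=\frac{1-\alpha}{\alpha}$ when $S(\frac{1-\alpha}{\alpha})<0$, and otherwise the inflection point $\hat\sigma=\frac{1-\alpha}{3\alpha}\frac{X_A}{X_B}$, where negativity reduces to $\frac{X_A}{X_B}>\sqrt{\frac{1}{3}+\frac{3\alpha}{1-\alpha}\bigl(1-3(\tfrac{\alpha}{1-\alpha})^2\bigr)}$, whose right-hand side is at most $1$. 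These concrete inequality checks, and their analogues for $X_B/X_A>1$ (e.g.\ that in your case (ii) sub-branch $S(\sigma_+)<0$ by itself forces $S(\sigma_-)>0$ and the correct locations), do not fall out of the Vieta relations and the critical-point identity without further real work; until they are done, your argument only establishes the intermediate characterization in terms of $\sigma_\pm\in(a,b)$ and both signs, not the lemma as stated.

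A second point your reduction must confront, which the paper quietly sidesteps: $\sigma_+$ need not lie in the middle interval at all (note the paper's use of $\min\{\sigma_+,\frac{1-\alpha}{\alpha}\}$), so ``$S(\sigma_\pm)$'' in the lemma statement is the piecewise function, which may be the linear or the first piece at a critical point of the cubic. Your claimed equivalences therefore have to distinguish the cubic's value at $\sigma_\pm$ from $S$'s value there when a critical point falls outside $[\frac{\alpha}{1-\alpha},\frac{1-\alpha}{\alpha})$; this is exactly why the paper argues through an auxiliary point $\hat\sigma$ rather than through $S(\sigma_+)$ itself. (Both you and the paper also gloss the degenerate boundary where a critical value vanishes and a double zero appears, but that is a shared, minor issue.)
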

The proof is given in the Appendix. The equilibrium strategies are given as follows.
\begin{fact}[Using results from \cite{Kovenock_2020}]\label{fact:GGL_strats}
	Let $\sigma^*$ be a zero of $\sf$, and let $\Omega(\sigma^*) = \{b\in\{1,2\} : \frac{v_{B,b}}{v_{A,b}} \geq \sigma^*\}$. The corresponding unique equilibrium strategies are given as follows. For all $b \in \Omega(\sigma^*)$, 
	\begin{equation}\label{eq:priority_bf}
	\begin{aligned}
	F_{A,b}(x) &= (1-\frac{v_{A,b}}{v_{B,b}}\sigma^*) + \frac{\lambda_B^*}{v_{B,b}}x, \quad  x \in [0,\frac{v_{A,j}}{\lambda_A^*}) \\
	F_{B,b}(x) &= \frac{\lambda_A^*}{v_{A,b}}x, \quad x \in [0,\frac{v_{A,j}}{\lambda_A^*}).
	\end{aligned}
	\end{equation}
	For all $b \notin \Omega(\sigma^*)$,
	\begin{equation}
	\begin{aligned}
	F_{A,b}(x) &= \frac{\lambda_B^*}{v_{B,b}}x, \quad x \in [0,\frac{v_{B,j}}{\lambda_B^*}) \\
	F_{B,b}(x) &= (1-\frac{v_{B,b}}{v_{A,b}\sigma^*}) + \frac{\lambda_A^*}{v_{A,b}}x, \quad x \in [0,\frac{v_{B,j}}{\lambda_B^*}) .
	\end{aligned}
	\end{equation}
	Here, $\lambda_A^* := \frac{1}{2X_B}\left(\sum_{b\in\Omega(\sigma^*)} v_{A,b} + \frac{1}{(\sigma^*)^2}\sum_{b\notin \Omega(\sigma^*)} \frac{v_{B,b}^2}{v_{A,b}} \right)$ and $\lambda_B^* := \sigma^*\lambda_A^*$.
\end{fact}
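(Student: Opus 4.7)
The plan is to verify directly that the marginals $F_{A,b}, F_{B,b}$ displayed in the Fact, together with the multipliers $\lambda_A^*$ and $\lambda_B^* = \sigma^*\lambda_A^*$, constitute a Nash equilibrium of $\text{GGL}(X_A,X_B,\alpha)$, via the Lagrangian characterization that is standard for General Lotto games. Relaxing the expected-budget constraint \eqref{eq:lotto_constraint} with a multiplier $\lambda_i^*$, a pair of admissible strategies is a Nash equilibrium whenever (i) on each battlefield $b$, every $x$ in the support of $F_{i,b}$ maximizes $v_{i,b}F_{-i,b}(x) - \lambda_i^* x$; (ii) no $x$ outside the support does strictly better; and (iii) each player's expected total expenditure equals its budget.

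I would first check that the displayed marginals are valid CDFs: they are affine and nondecreasing in $x$, the atom $1 - \frac{v_{A,b}}{v_{B,b}}\sigma^*$ placed at $0$ for $b \in \Omega(\sigma^*)$ is nonnegative precisely because $v_{B,b}/v_{A,b} \geq \sigma^*$ (symmetrically for $b \notin \Omega(\sigma^*)$), and each CDF evaluates to $1$ at the stated right endpoint after substituting $\lambda_B^* = \sigma^*\lambda_A^*$. The indifference conditions then follow by direct substitution: the quantity $v_{A,b}F_{B,b}(x) - \lambda_A^* x$ collapses to a constant on the support on each battlefield (equal to $0$ for $b \in \Omega(\sigma^*)$ and to $v_{A,b} - v_{B,b}/\sigma^*$ otherwise), and the symmetric expression for player $B$ yields a constant exactly when $\lambda_B^* = \sigma^*\lambda_A^*$. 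Deviations beyond the right endpoint are unprofitable because $F_{-i,b}$ has already reached $1$ there, so the marginal cost $\lambda_i^*$ is paid without any corresponding increase in win probability.

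The main step is to verify that the expected total expenditures match the budgets, and this is the point at which the hypothesis $\sf(\sigma^*) = 0$ is used. Integrating $x$ against the uniform densities and summing across both battlefields yields
\begin{equation*}
\mathbb{E}_{F_B}\left[\sum_b x_{B,b}\right] = \frac{1}{2\lambda_A^*}\left[\sum_{b \in \Omega(\sigma^*)} v_{A,b} + \frac{1}{(\sigma^*)^2}\sum_{b \notin \Omega(\sigma^*)}\frac{v_{B,b}^2}{v_{A,b}}\right],
\end{equation*}
and equating this to $X_B$ recovers exactly the closed-form formula for $\lambda_A^*$ stated in the Fact. The parallel calculation for player $A$, using $\lambda_B^* = \sigma^*\lambda_A^*$, gives
\begin{equation*}
\mathbb{E}_{F_A}\left[\sum_b x_{A,b}\right] = \frac{1}{2\lambda_A^*}\left[\sigma^*\sum_{b \in \Omega(\sigma^*)}\frac{v_{A,b}^2}{v_{B,b}} + \frac{1}{\sigma^*}\sum_{b \notin \Omega(\sigma^*)} v_{B,b}\right].
\end{equation*}
Imposing equality with $X_A$, substituting the closed-form $\lambda_A^*$, and specializing to $(v_{A,1},v_{B,1},v_{A,2},v_{B,2}) = (\alpha,1-\alpha,1-\alpha,\alpha)$, reduces the resulting identity -- according to whether $\sigma^*$ lies below $\tfrac{\alpha}{1-\alpha}$, in the middle interval, or above $\tfrac{1-\alpha}{\alpha}$ (which determines $\Omega(\sigma^*)$) -- to exactly the corresponding branch of $\sf(\sigma^*) = 0$ in \eqref{eq:soln_func}. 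Carrying out this algebraic reduction across the three regimes is the only nontrivial calculation and is the main obstacle.

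For uniqueness of the equilibrium corresponding to a given zero $\sigma^*$, I would invoke the classical Lotto-game argument: any deviation from piecewise-affine marginals with the prescribed atoms would leave a gap in the support that the opponent could exploit by shifting mass just above it, forcing the marginals into the displayed form; the atoms and support endpoints are then pinned down by the properness requirement together with the fixed ratio $\sigma^* = \lambda_B^*/\lambda_A^*$. I would cite \cite{Kovenock_2020} for this uniqueness step rather than reproduce it in detail.
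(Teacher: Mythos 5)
The paper offers no proof of this Fact at all: it is imported wholesale from Kovenock and Roberson \cite{Kovenock_2020}, with the label ``Using results from'' serving as the entire justification. Your proposal therefore goes beyond what the paper does, and the verification you outline is correct. I checked the key computations: the indifference constants are right ($v_{A,b}F_{B,b}(x)-\lambda_A^*x\equiv 0$ on $\Omega(\sigma^*)$ and $\equiv v_{A,b}-v_{B,b}/\sigma^*$ off it, with the symmetric statement for $B$ holding exactly because $\lambda_B^*=\sigma^*\lambda_A^*$); the CDFs reach $1$ at the stated endpoints; the budget integral for $B$ reproduces the closed form of $\lambda_A^*$; and imposing $A$'s budget and specializing to $(v_{A,1},v_{B,1},v_{A,2},v_{B,2})=(\alpha,1-\alpha,1-\alpha,\alpha)$ does reduce, in each of the three regimes for $\sigma^*$ (which determine $\Omega(\sigma^*)=\{1,2\}$, $\{1\}$, or $\varnothing$), to the corresponding branch of $\sf(\sigma^*)=0$ in \eqref{eq:soln_func}. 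Two small caveats: to pass from the Lagrangian-relaxed optimality to Nash equilibrium of the budget-constrained game you should note explicitly that $\lambda_i^*>0$ and the expected-budget constraints bind, so any admissible deviation is weakly dominated; and the boundary cases $\sigma^*=\frac{\alpha}{1-\alpha}$ or $\frac{1-\alpha}{\alpha}$ (where the weak inequality in the definition of $\Omega(\sigma^*)$ matters and an atom vanishes) deserve a sentence. Your uniqueness step is, like the paper's, deferred to \cite{Kovenock_2020}, which is consistent with how the paper treats the entire Fact. In short: the paper buys the result by citation; your route buys a self-contained check that the stated formulas are internally consistent with the solution function $\sf$, at the cost of the three-regime algebra.
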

The set $\Omega(\sigma^*)$ can be interpreted as the set of battlefields that player $B$ places priority on in the equilibrium associated with $\sigma^*$, because it competes more aggressively on these battlefields than player $A$ does \eqref{eq:priority_bf}. Indeed, in a similar manner as the equilibrium strategies from GL games, player $A$ ``gives up" on each of these battlefields with a probability $1-\frac{v_{A,b}}{v_{B,b}}\sigma^*$. If $\sigma^* \in [\frac{\alpha}{1-\alpha},\frac{1-\alpha}{\alpha})$, for example, then $B$ allocates more resources in expectation to its more valuable $b=2$ than player $A$ does, and player $A$ allocates more resources to battlefield 1 in expectation. 

In the forthcoming analysis on pre-commitments in the GGL game, we will not need to make explicit use of the details of Fact \ref{fact:GGL_strats} since we are concerned only with the associated payoffs. Our results rely heavily on the characterization of the possibly non-unique equilibrium payoffs in the GGL game \eqref{eq:pi_GGL_nom}.


\subsection{Guaranteed payoffs through pre-commitments}

Suppose player $B$ has the option to pre-commit $p\in[0,X_B]$ to one of the battlefields $b \in \{1,2\}$. If player $A$ decides to match the pre-commitment $p \leq X_A$ on $b$, its payoff is given by
\begin{equation}
\begin{aligned}
u_A^{\texttt{M}}(p) = v_{A,b} + (1-v_{A,b})\cdot L(X_A-p,X_B-p).
\end{aligned}
\end{equation}
That is, $A$ secures $b$ and both players use their remaining resources to compete on the other battlefield. We note that a player's equilibrium payoff is unique in a Generalized Lotto game that has a single battlefield, and is given by the function $L$ scaled by its valuation of that battlefield. If player $A$ decides to withdraw, its payoff is given by
\begin{equation}
\begin{aligned}
u_A^{\texttt{W}}(p) = (1-v_{A,b})\cdot L(X_A,X_B-p).
\end{aligned}
\end{equation}
Player $A$ will make the choice, match or withdraw, that optimizes its subsequent payoff:
\begin{equation}\label{eq:uA_GGL}
u_A(p) = 
\begin{cases}
\max\left\{u_A^{\texttt{M}}(p),u_A^{\texttt{W}}(p) \right\}, &\text{if } p \leq X_A \\
u_A^{\texttt{W}}(p), &\text{if } p > X_A
\end{cases}
\end{equation}
Let $\texttt{A}_b(p) \in \{\texttt{M},\texttt{W}\}$ denote player A's optimal response to the pre-commitment $p$ on $b$. Player B's subsequent payoff is thus given by 
\begin{equation}\label{eq:uB_GGL}
u_B(p) = 
\begin{cases}
(1 - v_{B,b}) L(X_B-p,X_A-p), &\!\!\!\text{if } \texttt{A}_b(p) = \texttt{M} \\
v_{B,b} + (1 - v_{B,b}) L(X_B-p,X_A), &\!\!\!\text{if } \texttt{A}_b(p) = \texttt{W}
\end{cases}
\end{equation}

\begin{figure}[t]
	\centering
	\includegraphics[scale=.4]{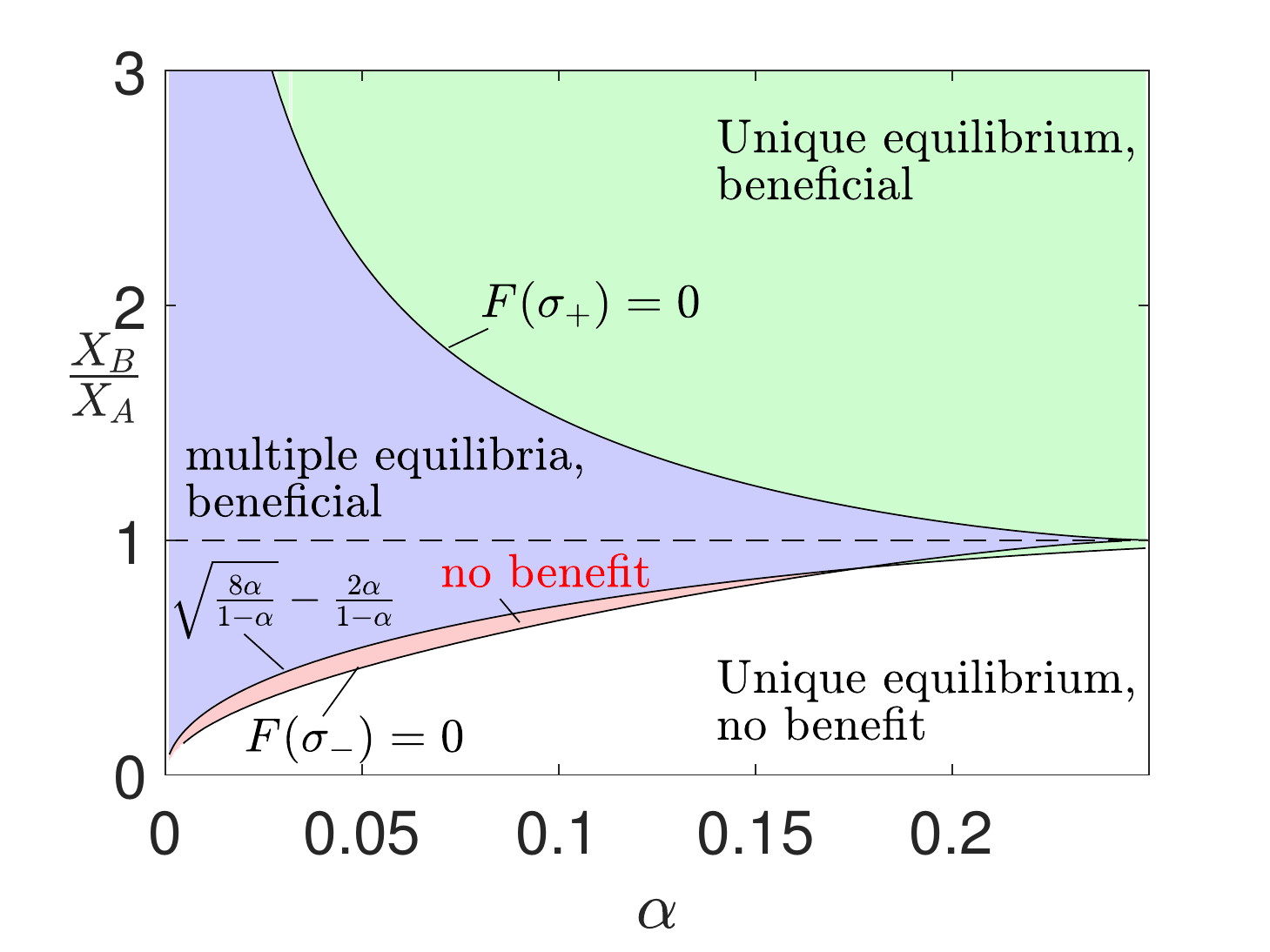}
	\caption{\small Parameters that admit three equilibria in $\text{GGL}(X_A,X_B,\alpha)$ is represented by the union of the blue and red regions (Lemma \ref{lem:multiplicity_regions}). The blue region indicates where there exist pre-commitments (to the first battlefield, worth $1-\alpha$) ensuring player $B$ a payoff better than the second-highest equilibrium payoff (Theorem \ref{thm:second_best}). In the red region, multiple equilibria exist but pre-commitments cannot do better than the second-highest equilibrium payoff.}\label{fig:GGL}
\end{figure}

The subsequent payoffs to both players are uniquely determined when player $B$ elects to pre-commit (\eqref{eq:uA_GGL} and \eqref{eq:uB_GGL}). In scenarios where the nominal game $\text{GGL}(X_A,X_B,\alpha)$ admits multiple payoff-distinct equilibria, one can thus view pre-committing as a way to guarantee a certain level of payoff. We will show there are pre-commitments that ensure a payoff that exceed the second-highest equilibrium payoff from the nominal, simultaneous move game $\text{GGL}(X_A,X_B,\alpha)$, regardless of whether player $B$ is the weaker or stronger player.

We can now compare the quality of the three equilibria in the nominal GGL game to the payoff player $B$ can ensure through a pre-commitment \eqref{eq:uB_GGL}. In particular, we identify parameters for which there are pre-commitments that ensure a better payoff than the second-highest equilibrium payoff. Let us denote $Z$ as the set of zeros of $F$, which by Lemma \ref{lem:multiplicity_regions}, is either a singleton or contains three zeros. In cases where it has three zeros, let us denote $\sigma_1^*,\sigma_2^*$, and $\sigma_3^*$ as the zeros of $F$ corresponding to the best, middle, and worst equilibrium payoff available to player B, respectively. In particular, these satisfy $\sigma_1^*<\sigma_2^*<\sigma_3^*$ and $\pi_B(\sigma_1^*) > \pi_B(\sigma_2^*) > \pi_B(\sigma_3^*)$. 

First, we establish that when player $B$ is weaker, it never has an incentive to pre-commit to the second (non-priority) battlefield, which has value $v_{B,2} = \alpha < \frac{1}{2}$. The result below shows that any pre-commitment to battlefield 2 yields a subsequent payoff worse than the worst-case equilibrium payoff from the nominal GGL game.

\begin{lemma}\label{lem:PC_BF2}
	Suppose $\frac{X_B}{X_A} \leq 1$. If player $B$ pre-commits $p\in[0,X_B]$ to battlefield 2, which has value $v_{B,2} = \alpha < \frac{1}{2}$ to player $B$, then $u_B(p) < \min_{\sigma^* \in Z} \pi_B(\sigma^*)$. 
\end{lemma}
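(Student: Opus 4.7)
The plan is to establish the two-step bound $u_B(p) \leq (1-\alpha)\frac{X_B}{2X_A} < \min_{\sigma^* \in Z} \pi_B(\sigma^*)$; chaining these immediately yields the strict inequality in the claim.

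For the upper bound on $u_B(p)$, I would split on $A$'s optimal response. When $A$ matches, $u_B^{\texttt{M}}(p) = (1-\alpha)\frac{X_B - p}{2(X_A - p)}$; since $X_B \leq X_A$, a direct derivative check shows this is nonincreasing in $p$, so $u_B^{\texttt{M}}(p) \leq u_B^{\texttt{M}}(0) = (1-\alpha)\frac{X_B}{2X_A}$. The harder case is when $A$ withdraws, which requires $u_A^{\texttt{W}}(p) \geq u_A^{\texttt{M}}(p)$. Expanding this optimality condition gives $\frac{\alpha (X_B-p) p}{2X_A(X_A-p)} \geq 1-\alpha$. Substituting this lower bound into $u_B^{\texttt{W}}(p) - u_B^{\texttt{M}}(p) = \alpha - (1-\alpha)\frac{(X_B-p)p}{2X_A(X_A-p)}$ yields $u_B^{\texttt{W}}(p) - u_B^{\texttt{M}}(p) \leq \alpha - \frac{(1-\alpha)^2}{\alpha} = \frac{2\alpha - 1}{\alpha} < 0$, using $\alpha < \tfrac{1}{2}$. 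Hence in the withdrawal case as well, $u_B(p) = u_B^{\texttt{W}}(p) < u_B^{\texttt{M}}(p) \leq (1-\alpha)\frac{X_B}{2X_A}$.

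For the lower bound on $\min_{\sigma^*} \pi_B(\sigma^*)$, I would perform a case analysis on which piece of $\sf$ contains the largest zero $\sigma_{\max}^*$. If $\sigma_{\max}^* \geq \frac{1-\alpha}{\alpha}$, then the third piece of $\sf$ forces $\sigma_{\max}^* = \frac{X_A}{X_B}\!\left(\frac{(1-\alpha)^2}{\alpha} + \frac{\alpha^2}{1-\alpha}\right)$, so $\pi_B(\sigma_{\max}^*) = \frac{X_B}{2X_A} > (1-\alpha)\frac{X_B}{2X_A}$, strict since $\alpha > 0$. Otherwise $\sigma_{\max}^* < \frac{1-\alpha}{\alpha}$; since $\pi_B$ is continuous and strictly decreasing in $\sigma^*$ across all three pieces, $\pi_B(\sigma_{\max}^*) > \pi_B\!\left(\frac{1-\alpha}{\alpha}\right) = \frac{1-\alpha}{2} + \frac{\alpha^3}{2(1-\alpha)^2} > \frac{1-\alpha}{2} \geq (1-\alpha)\frac{X_B}{2X_A}$, where the last step uses $X_B \leq X_A$.

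The main obstacle is the withdrawal branch of the first step: one must convert $A$'s optimality condition into a sharp enough bound on $u_B^{\texttt{W}} - u_B^{\texttt{M}}$, and the sign-critical cancellation against $\alpha < \tfrac{1}{2}$ is what makes the argument go through -- it exploits precisely the fact that $B$ is pre-committing to its \emph{non-priority} battlefield. Combining the two bounds closes the chain $u_B(p) \leq (1-\alpha)\frac{X_B}{2X_A} < \min_{\sigma^* \in Z} \pi_B(\sigma^*)$, as desired.
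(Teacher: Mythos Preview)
Your proof is correct, and it shares the same two-step skeleton as the paper's argument---bound $u_B(p)$ above by $(1-\alpha)\tfrac{X_B}{2X_A}$, then bound $\min_{\sigma^*}\pi_B(\sigma^*)$ strictly above that---but the two steps are executed differently.

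For the first step, the paper takes a more direct route: it shows outright that $u_A^{\texttt M}(p) > u_A^{\texttt W}(p)$ for every $p\in[0,X_B]$ by verifying that the equivalent quadratic $\tfrac{\alpha}{2X_A}p^2 - (1-\alpha(1-\tfrac{X_B}{2X_A}))p + \alpha(X_B-2X_A)+X_A$ is positive on $[0,X_B]$. Hence the withdrawal branch is \emph{vacuous}, and $u_B(p)=(1-\alpha)\tfrac{X_B-p}{2(X_A-p)}$ immediately. Your approach instead keeps both branches alive and disposes of the withdrawal branch via the optimality condition, exploiting the sign of $2\alpha-1$. Both are valid; the paper's version yields the slightly stronger structural fact that $A$ always matches, whereas yours avoids the quadratic computation at the cost of an argument for a case that never actually occurs.

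For the second step, the paper simply asserts $\min_{\sigma^*\in Z}\pi_B(\sigma^*)\ge \tfrac{X_B}{2X_A}$ and then notes $\tfrac{X_B}{2X_A}>(1-\alpha)\tfrac{X_B}{2X_A}$. Your case split on the location of $\sigma_{\max}^*$ proves only the weaker bound $\min_{\sigma^*}\pi_B(\sigma^*)>(1-\alpha)\tfrac{X_B}{2X_A}$, but does so in a fully self-contained way (the paper's stronger inequality is stated without justification). Either suffices for the lemma.
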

\begin{proof}
	We show that player $A$ matches any pre-commitment $p\in[0,X_B]$. The condition $u_A^\texttt{M}(p) > u_A^\texttt{W}(p)$ for all $p\in[0,X_B]$ is equivalent to
	\begin{equation}
	\frac{\alpha}{2X_A}p^2 - (1 - \alpha(1-\frac{X_B}{2X_A}))p + \alpha(X_B-2X_A) + X_A > 0
	\end{equation}
	for all $p\in[0,X_B]$. The condition holds by observing this function is strictly decreasing for $p\in[0,X_B]$, and its values at the endpoints are positive. Therefore, player B's subsequent payoff is given by $u_B(p) = (1-\alpha)\frac{X_B-p}{2(X_A-p)}$. Note this is strictly decreasing in $p$, so $\max_{p\in[0,X_B]} u_B(p) = u_B(0) = (1-\alpha)\frac{X_B}{2X_A}$.
	
	When $\frac{X_B}{X_A} \leq 1$, player $B$'s worst-case payoff satisfies $\min_{\sigma^*\in Z}\pi_B(\sigma^*) \geq \frac{X_B}{2X_A} > u_B(0)$. Hence, any pre-commitment leads to a payoff lower than the worst-case equilibrium of the nominal GGL game.
	
\end{proof}


Note that this result holds in all cases, i.e. whether there is a unique equilibrium (conditions of Lemma \ref{lem:multiplicity_regions} do not hold) or there are multiple equilibria. The following result demonstrates that when there are multiple equilibria, and under mild conditions of the parameters, pre-commitments can offer better a subsequent payoff than the second-highest equilibrium payoff available in $\text{GGL}(X_A,X_B,\alpha)$.

\begin{theorem}\label{thm:second_best}
	Suppose there are three equilibria in the nominal GGL game, i.e. conditions of Lemma \ref{lem:multiplicity_regions} are met. Then there are pre-commitments for player $B$ that ensure its payoff \eqref{eq:uB_GGL} exceeds the second-highest equilibrium payoff $\pi_B(\sigma_2^*)$ \eqref{eq:pi_GGL_nom} if $\sqrt{\frac{8\alpha}{1-\alpha}} - \frac{2\alpha}{1-\alpha} \leq \frac{X_B}{X_A} \leq 1$, or $\frac{X_B}{X_A} > 1$.
\end{theorem}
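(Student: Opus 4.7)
The plan is to exhibit an explicit pre-commitment on battlefield 1 (player $B$'s priority battlefield, valued at $1-\alpha$ by $B$ and only $\alpha$ by $A$) that induces $A$ to withdraw, and then show its induced subsequent payoff strictly exceeds $\pi_B(\sigma_2^*)$. By Lemma \ref{lem:PC_BF2} (and an analogous argument in the stronger regime), pre-commitments placed on battlefield 2 are dominated by the worst equilibrium payoff, so they cannot beat $\pi_B(\sigma_2^*)$ either; hence the restriction to battlefield 1 is without loss of generality.

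First I would identify, in each budget regime, a feasible pre-commitment $p^* \in [0, X_B]$ satisfying $\texttt{A}_1(p^*) = \texttt{W}$, so that by \eqref{eq:uB_GGL} we obtain the closed-form payoff $u_B(p^*) = (1-\alpha) + \alpha L(X_B - p^*, X_A)$. In the \emph{stronger} regime $X_B/X_A > 1$, the natural candidate is $p^* = X_A$ (taken as $X_A + \epsilon$), which forces withdrawal since matching is infeasible; the resulting expression for $u_B(X_A)$ depends on whether $X_B - X_A \lessgtr X_A$ via the piecewise definition of $L$ in \eqref{eq:twoplayer_payoff}. In the \emph{weaker} regime $X_B/X_A \leq 1$, matching remains feasible for every $p \in [0, X_B]$, so I would directly solve the inequality $u_A^{\texttt{W}}(p) > u_A^{\texttt{M}}(p)$. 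A short calculation using the piecewise form of $L$ reduces this to a quadratic inequality in $p$, whose feasibility over $[0,X_B]$ yields precisely the threshold $X_B/X_A \geq \sqrt{8\alpha/(1-\alpha)} - 2\alpha/(1-\alpha)$ appearing in the theorem. I would then take $p^*$ to be, say, the interior withdrawal-inducing pre-commitment that maximizes $u_B$.

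Next I would bound $\pi_B(\sigma_2^*)$ from above using the structure of $\sf$ developed in Lemma \ref{lem:multiplicity_regions}. On the middle segment $[\alpha/(1-\alpha),(1-\alpha)/\alpha)$ the cubic piece of $\sf$ has leading coefficient $\alpha^2/(1-\alpha) > 0$ with local maximum at $\sigma_-$ and local minimum at $\sigma_+$, so whenever three zeros exist, the middle zero $\sigma_2^*$ is trapped in $[\sigma_-, \sigma_+]$. Since $\pi_B(\sigma^*)$ is strictly decreasing in $\sigma^*$ on this segment (clear from \eqref{eq:pi_GGL_nom}), this yields the bound $\pi_B(\sigma_2^*) \leq \pi_B(\sigma_-)$. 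Substituting the closed-form
\[
\sigma_- = \frac{1-\alpha}{3\alpha}\!\left[\tfrac{X_A}{X_B} - \sqrt{\left(\tfrac{X_A}{X_B}\right)^2 - \tfrac{3\alpha}{1-\alpha}}\,\right]
\]
into the middle-segment expression for $\pi_B$, the comparison $u_B(p^*) > \pi_B(\sigma_-)$ becomes an algebraic inequality in the parameters $\alpha$ and $X_B/X_A$ alone.

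The main obstacle I anticipate is this final algebraic step, particularly in the weaker regime: both the payoff $u_B(p^*)$ (with $p^*$ being the interior threshold) and the upper bound $\pi_B(\sigma_-)$ involve square roots, so the comparison requires careful squaring and sign tracking, and the precise cutoff $\sqrt{8\alpha/(1-\alpha)} - 2\alpha/(1-\alpha)$ must emerge as the boundary case. A secondary issue is verifying that the bound $\sigma_2^* \geq \sigma_-$ is actually the relevant one (rather than, say, $\sigma_2^* \geq \alpha/(1-\alpha)$) in each subcase of Lemma \ref{lem:multiplicity_regions}; this will require a short case analysis of which segment each of $\sigma_1^*, \sigma_2^*, \sigma_3^*$ lies in, using the sign information at the segment endpoints and at the critical points $\sigma_\pm$.
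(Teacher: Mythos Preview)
Your plan is structurally the same as the paper's: restrict to battlefield~1, exhibit a withdrawal-inducing pre-commitment, and bound $\pi_B(\sigma_2^*)$ from above via the location of $\sigma_2^*$ relative to $\sigma_-$. The difference lies in how the final comparison is handled, and the paper's choice is considerably cleaner than the direct algebraic comparison you anticipate being the main obstacle.

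Rather than computing $\pi_B(\sigma_-)$ explicitly and comparing it to $u_B(p^*)$, the paper interposes the value $1-\alpha$. From the middle-segment formula in \eqref{eq:pi_GGL_nom} one sees that $\pi_B(\sigma)=1-\alpha$ precisely at $\sigma=\sqrt{\alpha/(1-\alpha)}$, so it suffices to show $\sigma_2^*>\sqrt{\alpha/(1-\alpha)}$. Using your own observation $\sigma_2^*>\sigma_-$, this reduces to the single elementary inequality $\sigma_->\sqrt{\alpha/(1-\alpha)}$, which after clearing the square root becomes a condition on $X_A/X_B$ that is implied by the theorem's hypothesis $X_B/X_A\geq\sqrt{8\alpha/(1-\alpha)}-2\alpha/(1-\alpha)$ (and holds automatically when $X_B/X_A>1$). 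On the other side, \emph{any} withdrawal-inducing pre-commitment on battlefield~1 yields $u_B(p)=(1-\alpha)+\alpha\,L(X_B-p,X_A)>1-\alpha$ trivially, so the comparison $u_B(p^*)>\pi_B(\sigma_2^*)$ follows without ever substituting the closed form of $\sigma_-$ into $\pi_B$. This eliminates the squaring-and-sign-tracking step you flagged.

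Two minor points. First, in the stronger regime the paper does not take $p^*=X_A+\epsilon$; it solves for the indifference point (which can lie strictly below $X_A$, since $A$ may prefer to withdraw even when matching is feasible). Either choice works once the $1-\alpha$ bound is in hand, but the indifference point gives the larger $u_B$. Second, the threshold $\sqrt{8\alpha/(1-\alpha)}-2\alpha/(1-\alpha)$ indeed emerges exactly as you describe: it is the discriminant condition for the quadratic $u_A^{\texttt{W}}(p)=u_A^{\texttt{M}}(p)$ to have real roots in the weaker regime, and the paper derives it in the same way.
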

\begin{proof}
	To prove the result, we first establish that $\pi_B(\sigma_2^*) < 1-\alpha$. Then, under the given assumptions, we show there are pre-commitments to battlefield 1 that give a payoff $u_B(t) > 1-\alpha$ because they cause player $A$ to withdraw.
	
	In the nominal GGL game, the middle equilibrium payoff is $\pi_B(\sigma_2^*) = 1-\alpha + \frac{\alpha}{2}(\frac{\alpha }{\sigma_2^*(1-\alpha)} - \sigma_2^*) < 1-\alpha$. The inequality follows by establishing $\sigma_2^* > \sqrt{\frac{\alpha}{1-\alpha}}$: since $\sigma_- < \sigma_2^*$, it will suffice to show that $\sigma_- > \sqrt{\frac{\alpha}{1-\alpha}}$. This is equivalent to the condition $\frac{X_A}{X_B} < \frac{3(1+(\frac{\alpha}{1-\alpha})^2)}{2\sqrt{\frac{\alpha}{1-\alpha}}}$. We observe that $(\sqrt{\frac{8\alpha}{1-\alpha}} - \frac{2\alpha}{1-\alpha})^{-1} < \frac{3(1+(\frac{\alpha}{1-\alpha})^2)}{2\sqrt{\frac{\alpha}{1-\alpha}}}$ for all $\alpha \in [0,\frac{1}{2}]$, since this reduces to $1 < (\sqrt{2}+\sqrt{\frac{\alpha}{1-\alpha}})(1 + (\frac{\alpha}{1-\alpha})^2 )$.
	
	Now, suppose $\frac{X_B}{X_A} \leq 1$ and player $B$ pre-commits to battlefield 1 (Lemma \ref{lem:PC_BF2}). Player $A$ is indifferent between $\texttt{M}$ and $\texttt{W}$ (eq. \eqref{eq:uA_GGL}) for the pre-commitment
	\small
	\begin{equation}\label{eq:B_indifference_pts_GGL}
	{\small
		p_\pm = \frac{X_A}{2}\left[\left(\frac{X_B}{X_A}+\frac{2\alpha}{1-\alpha}\right) \pm \sqrt{\left(\frac{X_B}{X_A}+\frac{2\alpha}{1-\alpha}\right)^2 - \frac{8\alpha}{1-\alpha}} \right].}
	\end{equation}
	\normalsize
	Such values exist only if $\frac{X_B}{X_A} \geq \sqrt{\frac{8\alpha}{1-\alpha}} - \frac{2\alpha}{1-\alpha}$, where it follows that $p_\pm \in [0,X_B]$. Here, $\texttt{A}(p) = \texttt{M}$ for $p \in [0,p_-) \cup [p_+,X_B]$, and $\texttt{A}(p) = \texttt{M}$ for $p \in [p_-,p_+)$. For the pre-commmitment $p_-$, player B's subsequent payoff is $u_B(p_-) = (1-\alpha) + \alpha\frac{X_B-p_-}{2X_A} > 1-\alpha$.  
	
	Suppose $\frac{X_B}{X_A} > 1$, and player $B$ pre-commits to battlefield 1. If $\frac{X_B}{X_A} \geq  \frac{1+\alpha}{1-\alpha}$, player $A$ is indifferent between \texttt{M} and \texttt{W}  for the pre-commitment $p = \frac{2\alpha}{1+\alpha}X_B \leq X_B - X_A$. If $\frac{X_B}{X_A} <  \frac{1+\alpha}{1-\alpha}$, player $A$ is indifferent between \texttt{M} and \texttt{W} for the pre-commitment $p = X_B - \frac{X_A}{2(1-\alpha)}\left[ (1-3\alpha) + \sqrt{(1-3\alpha)^2 + 4(1-\alpha)^2(\frac{X_B}{X_A} - 1)}\right] > X_B - X_A$. In either case, the pre-commitment $p$ induces player $A$ to withdraw, and the subsequent payoff is $u_B(p)=1-\alpha + \alpha\frac{X_B-p}{2X_A} > 1-\alpha$.
\end{proof}

It is interesting that player $B$ can have pre-commitments that ensure a payoff greater than $\pi_B(\sigma_2^*)$ whether it is the weaker or stronger player. Additionally, when player $B$ is stronger, the amount it needs to pre-commit to ensure a payoff greater than $\pi_B(\sigma_2^*)$ is less than player A's budget $X_A$. These aspects contrast with the result of Theorem \ref{thm:GL_result}: when player $B$ is weaker in the standard GL game, there are no pre-commitments that outperform the nominal equilibrium payoff. When player $B$ is stronger, a pre-commitment $p>X_A$ is required, i.e. to force player $A$ to withdraw. 

When $\text{GGL}(X_A,X_B,\alpha)$ admits a unique equilibrium, we identify below a region where there are beneficial pre-commitments, even when player $B$ is the weaker player. 
\begin{theorem}\label{thm:GGL_weaker_unique_benefit}
	Suppose that $\sqrt{\frac{8\alpha}{1-\alpha}} - \frac{2\alpha}{1-\alpha} \leq \frac{X_B}{X_A} < \min\left\{1,\sqrt{\frac{1-\alpha}{3\alpha}}\right\}$ and there is a unique equilibrium of the nominal GGL game. Then there are precommitments for player $B$ that ensures its payoff exceeds the (unique) nominal equilibrium payoff.
\end{theorem}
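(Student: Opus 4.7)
The plan is to separately establish (i) there is a pre-commitment yielding subsequent payoff strictly greater than $1-\alpha$ for $B$, and (ii) the unique equilibrium payoff $\pi_B(\sigma^*)$ is at most $1-\alpha$. Step (i) is furnished by the same construction used in Theorem \ref{thm:second_best}: since $\sqrt{8\alpha/(1-\alpha)} - 2\alpha/(1-\alpha) \leq X_B/X_A \leq 1$, the smaller indifference point $p_-$ from \eqref{eq:B_indifference_pts_GGL} lies in $[0, X_B]$, and pre-committing $p_-$ to battlefield $1$ (or slightly more, to force strict withdrawal by $A$) yields $u_B(p_-) = (1-\alpha) + \alpha(X_B - p_-)/(2X_A) > 1-\alpha$.

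For step (ii), I would first observe that $X_B/X_A < \sqrt{(1-\alpha)/(3\alpha)}$ is precisely the condition for the critical points $\sigma_\pm$ in Lemma \ref{lem:multiplicity_regions} to be well-defined. Combined with the uniqueness hypothesis and $X_B/X_A \leq 1$, the negation of the three-zeros condition in Lemma \ref{lem:multiplicity_regions} forces $\sf(\sigma_-) \leq 0$. Because $\sigma_-$ is the local maximum of the middle piece of $\sf$ and $\sf$ is negative on $(0, \alpha/(1-\alpha)]$ when $B$ is weaker (as verified by inspection of the first piece of \eqref{eq:soln_func}), the function $\sf$ remains non-positive throughout $(0, \sigma_+]$, so the unique zero satisfies $\sigma^* \geq \sigma_+$.

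Next, I would lower-bound $\sigma_+$ itself. Since $\sigma_\pm$ are the roots of the quadratic arising from the middle-piece derivative of $\sf$, Vieta's formulas give the invariant product $\sigma_- \sigma_+ = (1-\alpha)/(3\alpha)$. Because $\sigma_-$ attains its maximum value $\sqrt{(1-\alpha)/(3\alpha)}$ precisely when the discriminant vanishes, this forces $\sigma_+ \geq \sqrt{(1-\alpha)/(3\alpha)}$. Comparing this with $\sqrt{\alpha/(1-\alpha)}$ reduces to the inequality $\alpha \leq 1/(1+\sqrt{3})$, whereupon $\sigma_+ \geq \sqrt{\alpha/(1-\alpha)}$. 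Since $\pi_B$ is strictly decreasing on $[\alpha/(1-\alpha), \infty)$ and direct substitution into \eqref{eq:pi_GGL_nom} gives $\pi_B(\sqrt{\alpha/(1-\alpha)}) = 1-\alpha$, the chain $\pi_B(\sigma^*) \leq \pi_B(\sigma_+) \leq \pi_B(\sqrt{\alpha/(1-\alpha)}) = 1-\alpha < u_B(p_-)$ completes the argument.

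The main obstacle is the auxiliary algebraic check that $\alpha \leq 1/(1+\sqrt{3})$ is always implicitly enforced by the theorem's hypotheses. I would handle this by verifying that the interval $[\sqrt{8\alpha/(1-\alpha)} - 2\alpha/(1-\alpha),\, \min\{1, \sqrt{(1-\alpha)/(3\alpha)}\})$ is non-empty only for $\alpha$ in this range, via a straightforward case analysis on whether the upper bound equals $1$ or $\sqrt{(1-\alpha)/(3\alpha)}$. A minor additional care point is the indifference behavior of $A$ at $p = p_-$: this is handled by a small perturbation $p_- + \epsilon$ that induces strict withdrawal while preserving the payoff bound by continuity.
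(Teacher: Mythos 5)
Your proof is correct, but it reaches the key bound $\pi_B(\sigma^*)\le 1-\alpha$ by a genuinely different route than the paper. The paper only pushes the unique zero past the \emph{local maximum}: since $\sf<0$ on $(0,\sigma_-]$ under uniqueness, $\sigma^*>\sigma_-$, and it then reuses the computation from Theorem \ref{thm:second_best} that the lower bound $X_B/X_A\ge\sqrt{8\alpha/(1-\alpha)}-2\alpha/(1-\alpha)$ forces $\sigma_->\sqrt{\alpha/(1-\alpha)}$, hence $\pi_B(\sigma^*)<1-\alpha$. You instead exploit uniqueness more aggressively to push $\sigma^*$ past the \emph{local minimum} $\sigma_+$, and then use Vieta's identity $\sigma_-\sigma_+=\frac{1-\alpha}{3\alpha}$ (which, with $\sigma_-\le\sigma_+$, immediately gives $\sigma_+\ge\sqrt{\tfrac{1-\alpha}{3\alpha}}$) together with the observation that non-emptiness of the hypothesis interval forces $\alpha\le 1/(1+\sqrt{3})$, so that $\sigma_+\ge\sqrt{\alpha/(1-\alpha)}$. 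This buys something: your payoff bound does not use the budget-ratio lower bound at all (it is needed only for the existence of a withdrawal-forcing pre-commitment, your step (i), which matches the paper's use of $p_-$ from \eqref{eq:B_indifference_pts_GGL}; your $\epsilon$-perturbation at the indifference point is if anything more careful than the paper). The price is the auxiliary non-emptiness check, which does go through: the interval $[\sqrt{8\alpha/(1-\alpha)}-2\alpha/(1-\alpha),\min\{1,\sqrt{(1-\alpha)/(3\alpha)}\})$ becomes empty well before $\alpha/(1-\alpha)$ reaches $1/\sqrt{3}$, so the needed inequality holds whenever the hypotheses are satisfiable.

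One step in your argument deserves an extra line. The claim that $\sf$ is non-positive on all of $(0,\sigma_+]$ rests on the cubic's increasing--decreasing shape, which governs $\sf$ only on the second interval $[\tfrac{\alpha}{1-\alpha},\tfrac{1-\alpha}{\alpha})$; if $\sigma_+\ge\tfrac{1-\alpha}{\alpha}$, the increasing linear third piece of \eqref{eq:soln_func} takes over before $\sigma_+$ and could in principle vanish there, so ``$\sigma^*\ge\sigma_+$'' is not yet justified. The paper asserts both critical points lie in the second interval under its hypotheses, but its payoff argument never actually needs $\sigma_+$, only $\sigma_-$, so it is insulated from this issue in a way your chain is not. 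The fix is cheap: either verify $\sigma_+<\tfrac{1-\alpha}{\alpha}$ from the hypotheses, or simply note that if the unique zero falls in the third interval then $\sigma^*\ge\tfrac{1-\alpha}{\alpha}\ge 1\ge\sqrt{\alpha/(1-\alpha)}$, which already yields $\pi_B(\sigma^*)\le 1-\alpha$ by the monotonicity of $\pi_B$ and your evaluation $\pi_B\bigl(\sqrt{\alpha/(1-\alpha)}\bigr)=1-\alpha$. With that sentence added, your argument is complete.
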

\begin{proof}
	The condition $\frac{X_B}{X_A} < \min\left\{1,\sqrt{\frac{1-\alpha}{3\alpha}}\right\}$ means both critical points $\sigma\pm$ are in the second interval $[\frac{\alpha}{1-\alpha},\frac{1-\alpha}{\alpha})$. The unique root $\sigma^*$ of $\sf$ also lies in the second interval, and satisfies $\sigma- < \sigma^*$. Thus, $\pi_A(\sigma^*) < 1-\alpha$, by using the same reasoning in the proof of Theorem \ref{thm:second_best}. The condition $\frac{X_B}{X_A} \geq \sqrt{\frac{8\alpha}{1-\alpha}} - \frac{2\alpha}{1-\alpha} $ ensures there are pre-commitments that force player $A$ to withdraw. Thus, player $B$ can get at least a payoff of $1-\alpha$ through a pre-commitment.
\end{proof}

\section{Conclusion}
This paper investigated competitive scenarios where publicly announcing strategic intentions can have advantages. Conventional wisdom would suggest that this should not be the case, since known strategies can be exploited by an opponent. We studied a formulation using General Lotto games that allows one of the players to pre-commit resources before engaging in competition. A main finding was that the weaker-resource player never has an incentive to do so, but a stronger-resource player has incentives under certain conditions. We then studied a setting where battlefields are not valued identically by the players. Without pre-commitments, this interaction can yield multiple payoff-distinct equilibria. We found that a weaker-resource player does have incentives to pre-commit resources. However, these incentives arise under a different context, where pre-committing can offer better performance than the second-highest equilibrium payoff in the game without pre-commitments.

\bibliographystyle{IEEEtran}
\bibliography{sources}

\begin{thebibliography}{10}
\providecommand{\url}[1]{#1}
\csname url@samestyle\endcsname
\providecommand{\newblock}{\relax}
\providecommand{\bibinfo}[2]{#2}
\providecommand{\BIBentrySTDinterwordspacing}{\spaceskip=0pt\relax}
\providecommand{\BIBentryALTinterwordstretchfactor}{4}
\providecommand{\BIBentryALTinterwordspacing}{\spaceskip=\fontdimen2\font plus
\BIBentryALTinterwordstretchfactor\fontdimen3\font minus
  \fontdimen4\font\relax}
\providecommand{\BIBforeignlanguage}[2]{{%
\expandafter\ifx\csname l@#1\endcsname\relax
\typeout{** WARNING: IEEEtran.bst: No hyphenation pattern has been}%
\typeout{** loaded for the language `#1'. Using the pattern for}%
\typeout{** the default language instead.}%
\else
\language=\csname l@#1\endcsname
\fi
#2}}
\providecommand{\BIBdecl}{\relax}
\BIBdecl

\bibitem{Chandan_2020}
R.~{Chandan}, K.~{Paarporn}, and J.~R. {Marden}, ``When showing your hand pays
  off: Announcing strategic intentions in colonel blotto games,'' in \emph{2020
  American Control Conference (ACC)}, 2020, pp. 4632--4637.

\bibitem{Abdallah_2021}
M.~Abdallah, T.~Cason, S.~Bagchi, and S.~Sundaram, ``The effect of behavioral
  probability weighting in a simultaneous multi-target attacker-defender
  game,'' \emph{arXiv preprint arXiv:2103.03392}, 2021.

\bibitem{Wu_2021_TAC}
C.~Wu, X.~Li, W.~Pan, J.~Liu, and L.~Wu, ``Zero-sum game-based optimal secure
  control under actuator attacks,'' \emph{IEEE Transactions on Automatic
  Control}, vol.~66, no.~8, pp. 3773--3780, 2021.

\bibitem{VonMoll_2020}
A.~Von~Moll, M.~Pachter, D.~Shishika, and Z.~Fuchs, ``Guarding a circular
  target by patrolling its perimeter,'' in \emph{2020 59th IEEE Conference on
  Decision and Control (CDC)}, 2020, pp. 1658--1665.

\bibitem{Vamvoudakis_2014}
K.~G. Vamvoudakis, J.~P. Hespanha, B.~Sinopoli, and Y.~Mo, ``Detection in
  adversarial environments,'' \emph{IEEE Transactions on Automatic Control},
  vol.~59, no.~12, pp. 3209--3223, 2014.

\bibitem{vonNeumann}
J.~Von~Neumann and O.~Morgenstern, \emph{Theory of games and economic
  behavior}.\hskip 1em plus 0.5em minus 0.4em\relax Princeton University Press,
  1944.

\bibitem{alpcan2010network}
T.~Alpcan and T.~Basar, ``Network security: A decision and game-theoretic
  approach,'' 2010.

\bibitem{Vamvoudakis_2017}
K.~G. Vamvoudakis, H.~Modares, B.~Kiumarsi, and F.~L. Lewis, ``Game
  theory-based control system algorithms with real-time reinforcement learning:
  How to solve multiplayer games online,'' \emph{IEEE Control Systems
  Magazine}, vol.~37, no.~1, pp. 33--52, 2017.

\bibitem{Basar_1976}
T.~Basar and H.~Selbuz, ``Properties of nash solutions of a two-stage
  nonzero-sum game,'' \emph{IEEE Transactions on Automatic Control}, vol.~21,
  no.~1, pp. 48--54, 1976.

\bibitem{Li_2020}
L.~Li and J.~S. Shamma, ``Efficient strategy computation in zero-sum asymmetric
  information repeated games,'' \emph{IEEE Transactions on Automatic Control},
  vol.~65, no.~7, pp. 2785--2800, 2020.

\bibitem{Kartik_2021asymmetric}
D.~Kartik and A.~Nayyar, ``Upper and lower values in zero-sum stochastic games
  with asymmetric information,'' \emph{Dynamic Games and Applications},
  vol.~11, no.~2, pp. 363--388, 2021.

\bibitem{Kartik_2021teams}
D.~Kartik, A.~Nayyar, and U.~Mitra, ``Common information belief based dynamic
  programs for stochastic zero-sum games with competing teams,'' \emph{arXiv
  preprint arXiv:2102.05838}, 2021.

\bibitem{Thomas_2018}
C.~Thomas, ``N-dimensional {Blotto} game with heterogeneous battlefield
  values,'' \emph{Economic Theory}, vol.~65, no.~3, pp. 509--544, 2018.

\bibitem{Bell_1980}
R.~M. Bell and T.~M. Cover, ``Competitive optimality of logarithmic
  investment,'' \emph{Mathematics of Operations Research}, pp. 161--166, 1980.

\bibitem{Myerson_1993}
R.~B. Myerson, ``Incentives to cultivate favored minorities under alternative
  electoral systems,'' \emph{American Political Science Review}, vol.~87,
  no.~4, pp. 856--869, 1993.

\bibitem{Hart_2008}
S.~Hart, ``Discrete {{Colonel} {Blotto} and General Lotto} games,''
  \emph{International Journal of Game Theory}, vol.~36, no. 3-4, pp. 441--460,
  2008.

\bibitem{Kovenock_2020}
D.~Kovenock and B.~Roberson, ``Generalizations of the general {Lotto} and
  {Colonel} {Blotto} games,'' \emph{Economic Theory}, pp. 1--36, 2020.

\bibitem{Shahrivar_2014}
E.~M. {Shahrivar} and S.~{Sundaram}, ``Multi-layer network formation via a
  {{Colonel} {Blotto}} game,'' in \emph{2014 IEEE Global Conference on Signal
  and Information Processing (GlobalSIP)}, Dec 2014, pp. 838--841.

\bibitem{Ferdowsi_2020}
A.~Ferdowsi, W.~Saad, and N.~B. Mandayam, ``Colonel blotto game for sensor
  protection in interdependent critical infrastructure,'' \emph{IEEE Internet
  of Things Journal}, vol.~8, no.~4, pp. 2857--2874, 2021.

\bibitem{Guan_2019}
S.~{Guan}, J.~{Wang}, H.~{Yao}, C.~{Jiang}, Z.~{Han}, and Y.~{Ren}, ``{Colonel}
  {Blotto} games in network systems: Models, strategies, and applications,''
  \emph{IEEE Transactions on Network Science and Engineering}, vol.~7, no.~2,
  pp. 637--649, 2020.

\bibitem{Borel}
E.~Borel, ``La th\'eorie du jeu les \'equations int\'egrales \`a noyau
  sym\'etrique,'' \emph{Comptes Rendus de l'Acad\'emie}, vol. 173, 1921.

\bibitem{Gross_1950}
O.~Gross and R.~Wagner, ``{A continuous {Colonel} {Blotto} game},'' RAND
  Project, Air Force, Santa Monica, Tech. Rep., 1950.

\bibitem{Blackett_1958}
D.~W. Blackett, ``Pure strategy solutions of blotto games,'' \emph{Naval
  Research Logistics Quarterly}, vol.~5, no.~2, pp. 107--109, 1958.

\bibitem{Shubik_1981}
M.~Shubik and R.~J. Weber, ``Systems defense games: Colonel blotto, command and
  control,'' \emph{Naval Research Logistics Quarterly}, vol.~28, no.~2, pp.
  281--287, 1981.

\bibitem{Roberson_2006}
B.~Roberson, ``The {Colonel} {Blotto} game,'' \emph{Economic Theory}, vol.~29,
  no.~1, pp. 1--24, 2006.

\bibitem{Golman_2009}
R.~Golman and S.~E. Page, ``General {Blotto}: games of allocative strategic
  mismatch,'' \emph{Public Choice}, vol. 138, no. 3-4, pp. 279--299, 2009.

\bibitem{Kovenock_2012}
D.~Kovenock and B.~Roberson, ``Coalitional colonel blotto games with
  application to the economics of alliances,'' \emph{Journal of Public Economic
  Theory}, vol.~14, no.~4, pp. 653--676, 2012.

\bibitem{Schwartz_2014}
G.~{Schwartz}, P.~{Loiseau}, and S.~S. {Sastry}, ``The heterogeneous {Colonel}
  {Blotto} game,'' in \emph{Int.\ Conf.\ on NETwork Games, COntrol and
  OPtimization}, Oct 2014, pp. 232--238.

\bibitem{Macdonell_2015}
S.~T. Macdonell and N.~Mastronardi, ``Waging simple wars: a complete
  characterization of two-battlefield {Blotto} equilibria,'' \emph{Economic
  Theory}, vol.~58, no.~1, pp. 183--216, 2015.

\bibitem{Boix_2020}
E.~Boix-Adser{\`a}, B.~L. Edelman, and S.~Jayanti, ``The multiplayer colonel
  blotto game,'' in \emph{Proceedings of the 21st ACM Conference on Economics
  and Computation}, 2020, pp. 47--48.

\bibitem{Fuchs_2012}
Z.~E. Fuchs and P.~P. Khargonekar, ``A sequential colonel blotto game with a
  sensor network,'' in \emph{2012 American Control Conference (ACC)}.\hskip 1em
  plus 0.5em minus 0.4em\relax IEEE, 2012, pp. 1851--1857.

\bibitem{Gupta_2014a}
A.~{Gupta}, G.~{Schwartz}, C.~{Langbort}, S.~S. {Sastry}, and T.~{Ba{\c s}ar},
  ``A three-stage colonel blotto game with applications to cyberphysical
  security,'' in \emph{2014 American Control Conference}, 2014, pp. 3820--3825.

\bibitem{Gupta_2014b}
A.~Gupta, T.~Ba{\c s}ar, and G.~Schwartz, ``A three-stage {Colonel} {Blotto}
  game: when to provide more information to an adversary,'' in
  \emph{International Conference on Decision and Game Theory for
  Security}.\hskip 1em plus 0.5em minus 0.4em\relax Springer, 2014, pp.
  216--233.

\bibitem{Kovenock_2010_KER}
D.~Kovenock, M.~J. Mauboussin, and B.~Roberson, ``Asymmetric conflicts with
  endogenous dimensionality,'' \emph{The Korean Economic Review}, vol.~26,
  no.~2, pp. 287--305, 2010.

\bibitem{Vu_2021_favoritism}
D.~Q. Vu and P.~Loiseau, ``Colonel blotto games with favoritism: Competitions
  with pre-allocations and asymmetric effectiveness,'' \emph{arXiv preprint
  arXiv:2106.00617}, 2021.

\bibitem{Li_2019}
L.~Li, C.~Langbort, and J.~S. Shamma, ``An lp approach for solving two-player
  zero-sum repeated bayesian games,'' \emph{IEEE Transactions on Automatic
  Control}, vol.~64, no.~9, pp. 3716--3731, 2019.

\bibitem{Adamo_2009}
T.~Adamo and A.~Matros, ``A {Blotto} game with incomplete information,''
  \emph{Economics Letters}, vol. 105, no.~1, pp. 100--102, 2009.

\bibitem{Paarporn_2021_budget}
K.~Paarporn, R.~Chandan, M.~Alizadeh, and J.~R. Marden, ``A general lotto game
  with asymmetric budget uncertainty,'' \emph{arXiv preprint arXiv:2106.12133},
  2021.

\bibitem{Kovenock_2011}
D.~Kovenock and B.~Roberson, ``A {Blotto} game with multi-dimensional
  incomplete information,'' \emph{Economics Letters}, vol. 113, no.~3, pp.
  273--275, 2011.

\bibitem{Paarporn_2019}
K.~{Paarporn}, R.~{Chandan}, M.~{Alizadeh}, and J.~R. {Marden},
  ``Characterizing the interplay between information and strength in blotto
  games,'' in \emph{2019 IEEE 58th Conference on Decision and Control (CDC)},
  2019, pp. 5977--5982.

\bibitem{Ewerhart_2021}
C.~Ewerhart and D.~Kovenock, ``A class of n-player colonel blotto games with
  multidimensional private information,'' \emph{Operations Research Letters},
  vol.~49, no.~3, pp. 418--425, 2021.

\end{thebibliography}

\appendices

\begin{proof}[Proof of Theorem \ref{thm:GL_result}]
	Suppose player $B$ is weaker, i.e. $X_B < X_A$. Suppose player B pre-commits $p \in [0,X_B]$ resources to battlefield $b \in \mcal{B}$ that has any value ${v_b} \in [0,\phi]$. The claim is that player $B$ cannot attain a subsequent payoff $u_B(p) > \pi_B^{\texttt{nom}}$ from any pre-commitment. To see this, consider the case that $A$ matches the pre-commitment. Its payoff is
	\begin{equation}
	\begin{aligned}
	u_{A,\{b\}}(p) &= v_b + (\phi-v_b)(1 - \frac{X_B-p}{2(X_A-p)}) \\
	&= \phi - (\phi-v_b)\frac{X_B-p}{2(X_A-p)}
	\end{aligned}
	\end{equation}
	This payoff always exceeds its nominal equilibrium payoff $\pi_A^\texttt{nom} = \phi(1 - \frac{X_B}{2X_A}$), since the function $\frac{X_B-p}{X_A-p}$ is decreasing on $p \in [0,X_A)$. Note that $u_A(p) = \pi_A^\texttt{nom}$ if and only if $(p,v) = (0,0)$, i.e. there is no pre-commitment. Player $A$'s subsequent payoff thus satisfies $u_A(p) = \max\{u_{A,\{b\}}(p),u_{A,\varnothing}(p)\} > \pi_A^\texttt{nom}$. Therefore, $u_B(p) = \phi - u_A(p) \leq \pi_B^\texttt{nom}$, with equality if and only if $(p,v) = (0,0)$.

	
	Now, we illustrate the proof for $X_A < X_B \leq 2X_A$. Let us denote $\gamma := \frac{X_B}{X_A}$ as the budget ratio, so that $\gamma \in (1,2]$. Player $A$'s payoff from matching the pre-commitment is given by
	\begin{equation}
	u_{A,\{b\}}(p) = v_b + (\phi-v_b) \frac{X_A-p}{2(X_B-p)}
	\end{equation}
	Player $A$'s payoff from withdrawing against the pre-commitment is given by
	\begin{equation}
	u_{A,\varnothing}(p) =
	\begin{cases}
	(\phi-v_b)\frac{X_A}{2(X_B-p)}, &\text{if } p \leq X_B - X_A \\
	(\phi-v_b)(1 - \frac{X_B-p}{2X_A}), &\text{if } p > X_B - X_A
	\end{cases}
	\end{equation}
	If $v_b > \frac{3-\gamma}{5-\gamma}\phi$, then Player $B$'s subsequent payoff is then given by
	\begin{equation}
	u_B(p) = 
	\begin{cases}
	\phi - u_{A,\{b\}}(p), &\text{if } p \in [0,X_A] \\
	\phi - (\phi-v_b)(1 - \frac{X_B-p}{2X_A}), &\text{if } p \in (X_A,X_B]
	\end{cases}
	\end{equation}
	The above expression is increasing on $p\in[0,X_A)$ ($A$ matches) and decreasing on $p \in [X_A,X_B]$ ($A$ withdraws). We observe that player $B$ cannot exceed its nominal payoff for any pre-commitment $p\in[0,X_A)$. Indeed, the condition $\phi - u_{A,\{b\}}(X_A) < \pi_B^\texttt{nom}$ is equivalent to $v_b>\frac{\phi}{2\gamma}$. It suffices to show that $\frac{\phi}{2\gamma} < \frac{3-\gamma}{5-\gamma}\phi$. This reduces to $\gamma^2 - \frac{7}{2}\gamma + \frac{5}{2} < 0$, which is satisfied for all $\gamma \in (1,\frac{5}{2})$.    
	
	The right-limit value $\lim_{p\searrow X_A} u_B(p) = \phi - \frac{1}{2}(\phi-v_b)(3-\gamma)$ exceeds the nominal equilibrium payoff if and only if $v_b > (1 - \frac{1}{\gamma(3-\gamma)})\phi$. Therefore, there exist beneficial pre-commitments $p>X_A$ for player $B$ if and only if the limit value satisfies $\bar{v} > (1 - \frac{1}{\gamma(3-\gamma)})\phi$.
	
	For the remainder of the proof, we show there are no beneficial pre-commitments for all other parameters. Let us consider the regime $v_b \in [\frac{\gamma-1}{\gamma+1}\phi,\frac{3-\gamma}{5-\gamma}\phi]$ (need $\gamma < 2$). Player $A$ becomes indifferent between matching and withdrawing under the pre-commitment $p^* = X_B - \frac{X_A}{2(\phi-v_b)}\left[(\phi-3v_b) + \sqrt{(\phi-3v_b)^2 + 4(\phi-v_b)^2(\gamma-1)} \right] \leq X_A$. Player $B$'s subsequent payoff is then given by
	\begin{equation}\label{eq:middle_region}
	u_B(p) = 
	\begin{cases}
	\phi - u_{A,\{b\}}(p), &\text{if } p \in [0,p^*) \\
	\phi - (\phi-v_b)(1 - \frac{X_B-p}{2X_A}), &\text{if } p \in [p^*,X_B]
	\end{cases}
	\end{equation}
	This is a continuous function in $p$, which is increasing on $p \in [0,p^*)$ ($A$ matches) and decreasing on $p \in [p^*,X_B]$ ($A$ withdraws). It takes its maximal value at $p^*$. To prove $u_B(p^*) < \pi_B^{\texttt{nom}}$, it suffices to show that player $B$ has an incentive to pre-commit only if player $A$ matches a pre-commitment $p > p*$, which is impossible due to \eqref{eq:middle_region}. Indeed, $\phi - u_A^\texttt{M}(p) > \pi_B^{\texttt{nom}}$ reduces to $p > \frac{2X_Bv_b(1-\frac{1}{2\gamma})}{2\phi(1-\frac{1}{2\gamma}) - (\phi-v_b)}$. We will show the RHS is greater than $t^*$. Such a condition is equivalent to
	\begin{equation}
	\begin{aligned}
	(\gamma+1)v_b^2 &- \phi(2(\gamma-1)+\gamma^{-1})v_b \\
	&+\phi^2(\gamma-3+3\gamma^{-1}- \gamma^{-2}) \leq 0
	\end{aligned}
	\end{equation}
	Denote the above quadratic function as $G(v_b)$. We claim the above inequality holds for all $v_b \in [\frac{\gamma-1}{\gamma+1}\phi,\frac{3-\gamma}{5-\gamma}\phi]$. Since $G$ is convex, it suffices to show it is non-positive at the endpoints. Indeed, one can show that $G(\frac{\gamma-1}{\gamma+1}\phi) < 0$ and $G(\frac{3-\gamma}{5-\gamma}\phi) < 0$ for all $\gamma \in (1,2)$.

	Now, consider the (remaining) regime $v_b < \frac{\gamma-1}{\gamma+1}$. The pre-commitment $p^* = \frac{2v_b}{\phi+v_b}X_B \leq X_B-X_A$ renders player A to be indifferent between matching and withdrawing. Player $B$'s subsequent payoff is given by
	\begin{equation}
	u_B(p) = 
	\begin{cases}
	\phi - u_{A,\{b\}}(p), &\text{if } p \in [0,p^*) \\
	\phi - (\phi-v_b)\frac{X_A}{2(X_B-p)}, &\text{if } p \in [p^*,X_B-X_A) \\
	\phi - (\phi-v_b)(1 - \frac{X_B-p}{2X_A}), &\text{if } p \in [X_B-X_A,X_B]
	\end{cases}
	\end{equation}
	This is increasing on $p \in [0,p^*)$ ($A$ matches) and decreasing on $p \in [p^*,X_B]$ ($A$ withdraws). Thus, $u_B(p)$ is maximized at $p^*$, which yields
	\begin{equation}
	u_B(p^*) = \phi - \frac{\phi+v_b}{2\gamma} < \phi - \frac{\phi}{2\gamma} = \pi_B^\texttt{nom}
	\end{equation}
	Hence, there is no incentive for player $B$ to pre-commit any resources in this regime.
	
	The case when $\gamma \geq 2$ follows similar arguments. Here, there will only be two regimes: $v_b \leq \frac{\phi}{2\gamma-1}$, in which $p^*=\frac{2v_b}{\phi+v_b} \leq X_A$ is the pre-commitment that makes $A$ indifferent between matching and withdrawing, and $v_b > \frac{\phi}{2\gamma-1}$, in which $A$ matches any $p \in [0,X_A]$ and withdraws against any $p \in (X_A,X_B]$. We omit technical details for brevity, as they follow similar approaches to the regime $\gamma \in (1,2]$.
\end{proof}


\begin{proof}[Proof of Lemma \ref{lem:multiplicity_regions}]
	We give the proof for $\frac{X_B}{X_A} \leq 1$, since the case $\frac{X_B}{X_A} > 1$ follows similar arguments. 
	
	There are no zeros of $\sf$ in the first interval $\gamma \in [0,\frac{\alpha}{1-\alpha})$ since $\sf(\gamma) < 0$ for all  $\gamma \in (0,\frac{\alpha}{1-\alpha})$. In the second interval $\sigma \in [\frac{\alpha}{1-\alpha},\frac{1-\alpha}{\alpha})$, the solution function is negative at the left endpoint: $\sf(\frac{\alpha}{1-\alpha}) < 0$ follows from $\frac{X_A}{X_B} \geq 1$. The critical points are located at the two positive values $\sigma_{\pm}$. The first critical point, $\sigma_-$, is a local maximum and the second critical point, $\sigma_+$, is a local minimum. These points are defined as real numbers if and only if $\frac{X_A}{X_B} > \sqrt{\frac{3\alpha}{1-\alpha}}$. Now, suppose $\sf(\sigma_-) > 0$. We note that this assumption automatically implies that $\sigma_- \in (\frac{\alpha}{1-\alpha}, \frac{\alpha}{1-\alpha})$, i.e. it is in the second interval. We have $\sigma_- > \frac{\alpha}{1-\alpha}$ since $\sf(\sigma) < 0$ on the first interval. We have $\sigma_- < \frac{1-\alpha}{\alpha}$ because $\frac{X_A}{X_B} - \sqrt{\left(\frac{X_A}{X_B}\right)^2 - \frac{3\alpha}{1-\alpha} } < 3$ if $\frac{X_A}{X_B} < 3$, and $\frac{X_A}{X_B} > \frac{1}{2}\left(3+\frac{\alpha}{1-\alpha} \right)$ if $\frac{X_A}{X_B}> 3$.
	
	We know there is at least one zero in the interval $[\frac{\alpha}{1-\alpha},\frac{1-\alpha}{\alpha})$ because $\sf(\frac{\alpha}{1-\alpha}) < 0$. It will then suffice to show that there is a point $\hat\sigma \in (\sigma_-,\frac{1-\alpha}{\alpha}]$ such that $\sf(\hat\sigma) < 0$. This establishes there are at least two roots in the interval $[\frac{\alpha}{1-\alpha},\frac{1-\alpha}{\alpha})$. A third root $\sigma_3^* > \sigma_+$ is then guaranteed to exist because $\sf$ is strictly increasing (above a fixed rate) on $(\min\{\sigma_+,\frac{1-\alpha}{\alpha}\},\infty)$. 
	
	Suppose the solution function is negative at the right endpoint, i.e. $\sf(\frac{1-\alpha}{\alpha}) < 0$. This holds if and only if $\frac{X_A}{X_B} > \frac{(1-\alpha)^2}{\alpha^3 + (1-\alpha)^3}$.    In this regime, we can take $\hat{\sigma} = \frac{1-\alpha}{\alpha}$ and we are done. 
	
	Now, suppose $\sf(\frac{1-\alpha}{\alpha}) \geq 0$. Consider the point
	\begin{equation}
	\hat\sigma = \frac{1-\alpha}{3\alpha}\frac{X_A}{X_B}  > \sigma_-.
	\end{equation}
	We have $\hat\sigma < \frac{1-\alpha}{\alpha}$ if and only if $\frac{X_A}{X_B} < 3$, which is satisfied: $\frac{X_A}{X_B} \leq \frac{(1-\alpha)^2}{\alpha^3 + (1-\alpha)^3} < 3$. At this point, the solution function satisfies
	\begin{equation}
	\sf(\hat\sigma) \propto \frac{1}{3} + \frac{3\alpha}{1-\alpha}\left(1 - 3\left(\frac{\alpha}{1-\alpha} \right)^2 \right) - \left(\frac{X_A}{X_B}\right)^2.
	\end{equation}
	This is negative whenever $\frac{X_A}{X_B} > \sqrt{\frac{1}{3} + \frac{3\alpha}{1-\alpha}(1 - 3(\frac{\alpha}{1-\alpha} )^2 )}$. The expression inside the root takes its maximum value of 1 at $\alpha = \frac{1}{4}$. Hence, under the assumption that $\frac{X_A}{X_B}\geq 1$, the above condition is always met. Therefore, $\sf(\hat\sigma) < 0$.
\end{proof}

\begin{IEEEbiography}[{\includegraphics[width=1in,height=1.25in,clip,keepaspectratio]{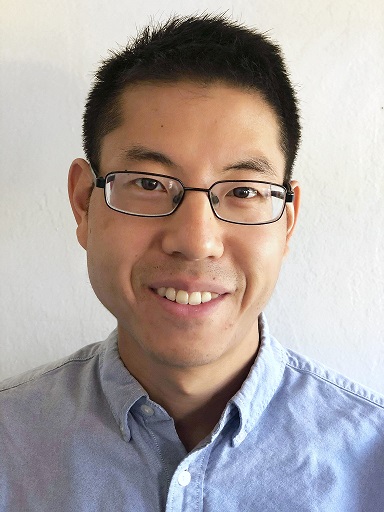}}] {Keith Paarporn}
	received the B.S. in Electrical Engineering from the University of Maryland, College Park in 2013, the M.S. in Electrical and Computer Engineering from the Georgia Institute of Technology in 2016, and the Ph.D. in Electrical and Computer Engineering from the Georgia Institute of Technology in 2018. He is currently a postdoctoral scholar in the Electrical and Computer Engineering Department at the University of California, Santa Barbara. His research interests include game theory, networked multi-agent systems, and control.
\end{IEEEbiography}

\begin{IEEEbiography}[{\includegraphics[width=1in,height=1.25in,clip,keepaspectratio]{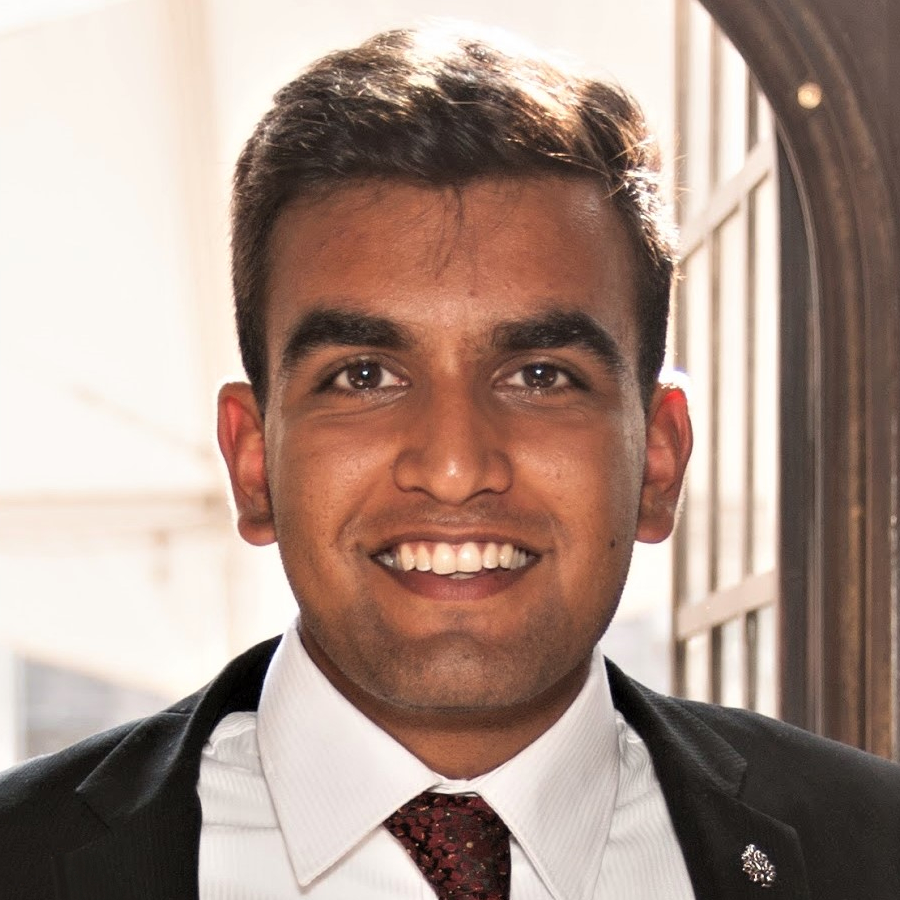}}] {Rahul Chandan}
	received the BASc in Engineering Science with a minor in Robotics and Mechatronics at the University of Toronto in 2017. He is currently pursuing a Ph.D in Electrical and Computer Engineering at the University of California, Santa Barbara. His research interests include economics and computation, game theory, and optimization.
\end{IEEEbiography}

\begin{IEEEbiography}[{\includegraphics[width=1in,height=1.25in,clip,keepaspectratio]{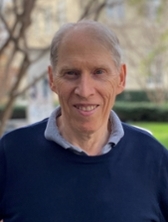}}] {Dan Kovenock}
	is Professor of Economics in the Economic Science Institute at Chapman University. He
	has also held professorships at the University of Iowa and Purdue University. He previously served as an
	Editor of the International Journal of Industrial Organization and on the editorial boards of the European
	Economic Review, Social Choice and Welfare, and the Strategic Management Journal. He is currently a
	Co-editor of Economic Theory and the Economic Theory Bulletin, and serves on the editorial boards of
	Games and Economic Behavior and the Journal of Public Economic Theory. He has more than 70 papers
	in international publications, including outlets such as the American Economic Review, the Review of
	Economic Studies, Management Science, Naval Research Logistics, Games and Economic Behavior, and
	Experimental Economics.
\end{IEEEbiography}

\begin{IEEEbiography}[{\includegraphics[width=1in,height=1.25in,clip,keepaspectratio]{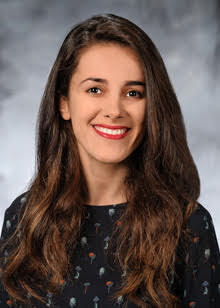}}] {Mahnoosh Alizadeh}
	received the B.Sc. degree in
	electrical engineering from the Sharif University
	of Technology in 2009, and the M.Sc. and Ph.D.
	degrees in electrical and computer engineering from
	the University of California at Davis in 2013 and
	2014, respectively. She is an Assistant Professor
	of electrical and computer engineering with the
	University of California at Santa Barbara. From
	2014 to 2016, she was a Post-Doctoral Scholar
	with Stanford University. Her research interests are
	focused on designing scalable control and market
	mechanisms for enabling sustainability and resiliency in societal infrastructures, with a particular focus on demand response and electric transportation
	systems. She was a recipient of the NSF CAREER Award.
\end{IEEEbiography}

\begin{IEEEbiography}[{\includegraphics[width=1in,height=1.25in,clip,keepaspectratio]{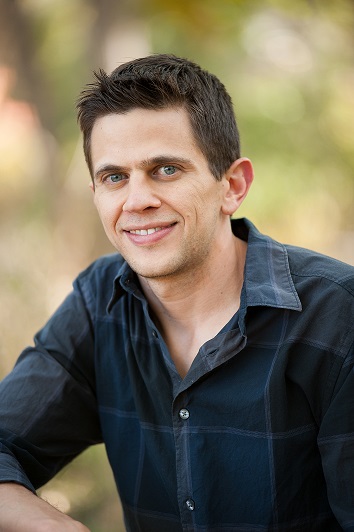}}] {Jason Marden}
	is an Associate Professor in the Department of Electrical and Computer Engineering at
	the University of California, Santa Barbara. Jason received a BS in Mechanical Engineering in 2001 from
	UCLA, and a PhD in Mechanical Engineering in
	2007, also from UCLA, under the supervision of Jeff
	S. Shamma, where he was awarded the Outstanding
	Graduating PhD Student in Mechanical Engineering.
	After graduating from UCLA, he served as a junior
	fellow in the Social and Information Sciences Laboratory at the California Institute of Technology until
	2010 when he joined the University of Colorado. Jason is a recipient of the NSF Career Award (2014), the ONR Young Investigator Award (2015), the AFOSR Young Investigator Award (2012), the American Automatic Control Council Donald P. Eckman Award (2012), and the SIAG/CST Best SICON Paper Prize (2015). Jason’s research interests focus on game theoretic methods for the control of distributed multiagent systems.
\end{IEEEbiography}

\end{document}